\newtheorem{theorem}{Theorem}
\newtheorem{definition}{Definition}
\newtheorem{lemma}{Lemma}
\newtheorem{problem}{Optimization Problem}
\newtheorem{corollary}{Corollary}
\newcommand{\eq}[1]{Eq.~\eqref{#1}}
\begin{document}

\title{Femto-Caching with Soft Cache Hits: Improving Performance through Recommendation and Delivery of Related Content\\
}

\author{Pavlos Sermpezis\textsuperscript{1}, 
        Thrasyvoulos Spyropoulos\textsuperscript{2},  
        Luigi Vigneri\textsuperscript{2}, 
        and Theodoros Giannakas\textsuperscript{2}\\
        \textsuperscript{1}~ICS-FORTH, ~Greece, sermpezis@ics.forth.gr\\
        \textsuperscript{2}~EURECOM, ~France, first.last@eurecom.fr
        }

\maketitle

\begin{abstract}
Pushing popular content to cheap ``helper'' nodes (e.g., small cells) during off-peak hours has recently been proposed to cope with the increase in mobile data traffic. User requests can be served locally from these helper nodes, if the requested content is available in at least one of the nearby helpers. Nevertheless, the collective storage of a few nearby helper nodes does not usually suffice to achieve a high enough hit rate in practice. We propose to depart from the assumption of hard cache hits, common in existing works, and consider ``soft'' cache hits, where if the original content is not available, some related contents that are locally cached can be recommended instead. Given that Internet content consumption is entertainment-oriented, we argue that there exist scenarios where a user might accept an alternative content (e.g., better download rate for alternative content, low rate plans, etc.), thus avoiding to access expensive/congested links. We formulate the problem of optimal edge caching with soft cache hits in a relatively generic setup, propose efficient algorithms, and analyze the expected gains. We then show using synthetic and real datasets of related video contents that promising caching gains could be achieved in practice.
\end{abstract}

\section{Introduction}
\label{sec:intro}


Mobile edge caching has been identified as one of the five most disruptive enablers for 5G networks~\cite{Bocc2014}, both to reduce content access latency and to alleviate backhaul congestion. 
However, the number of required storage points in future cellular networks will be orders of magnitude more than in traditional CDNs~\cite{borst2010} (e.g., 100s or 1000s of small cells corresponding to an area covered by a single CDN server today). As a result, the storage space per local edge cache must be significantly smaller to keep costs reasonable.
Even if we considered a small subset of the entire Internet catalogue, e.g., a typical torrent catalog (1.5PB) or the Netflix catalogue (3PBs), with relatively skewed popularity distribution, and more than 1TB of local storage, cache hit ratios would still be low~\cite{Paschos-misconceptions, tutorial-sigmetrics}.


Additional caching gains have been sought by researchers, increasing the ``effective'' cache size visible to each user. This could be achieved by: (a) \emph{Coverage overlaps}, where each user is in range of multiple cells, thus having access to the aggregate storage capacity of these cells, as in the femto-caching framework~\cite{femto,poularakis2014toc}. (b) \emph{Coded caching}, where collocated users overhearing the same broadcast channel may benefit from cached content in other users' caches~\cite{Ali2014}. (c) \emph{Delayed content access}, where a user might wait up to a TTL for its request, during which time more than one cache (fixed~\cite{sermpezis2014
} or mobile~\cite{Hui-offloading,Whitbeck-offloading,vigneri2016}) can be encountered. Each of these ideas could theoretically increase the cache hit ratio (sometimes significantly), but the actual practical gains might not suffice by themselves (e.g., due to high enough cell density required for (a), sub-packetization complexity in (b), and imposed delays in (c)). 

Existing edge caching approaches have a common goal: to deliver  \emph{every possible} content a user requests (if not from a local cache, then from a remote content server). While reasonable, this leads to many expensive cache misses that may potentially stifle the idea of edge caching. 
We argue that, in an Internet which is becoming increasingly entertainment-oriented \emph{moving away from satisfying a given user request towards satisfying the user} could prove beneficial for caching systems. When a user requests a content not available in the local cache(s), a recommendation system could propose a set of \emph{related contents} that \emph{are} locally available. If the user accepts one of these contents, an expensive remote access could be avoided. We will use the term \emph{soft cache hit} to describe such scenarios.


Although many users in today's cellular ecosystem might be reluctant to accept alternative contents, we believe there are a number of scenarios where soft cache hits could benefit both the user and the operator. As one example, a \emph{cache-aware} recommendation system could be a plugin to an existing application (e.g.,  the YouTube app), as shown in Fig.~\ref{fig:mobile-app-example}. The operator can give incentives to users to accept the alternative contents when there is congestion (e.g., \textit{zero-rating} services~\cite{t-mobile-music-freedom,neutral-net-neutrality}) or letting the user know that accessing content $X$ from the core infrastructure would be slow and choppy, while contents $A,B,C,...$ might have much better performance. The user can still reject the recommendation and demand the original content. In a second example (see Fig.~\ref{fig:mobile-app-example-second-model}) the operator might ``enforce'' an alternative (but related) content, e.g.: (i) making very low rate plans (currently offering little or no data) more interesting by allowing regular data access, except under congestion, at which time only locally cached content can be served; (ii) in developing areas~\cite{KioskNet} or when access to only a few Internet services is provided, e.g., the Facebook's Internet.org  (Free Basics) project~\cite{internet-org,free-basics}. 

We believe such a system is quite timely, given the increased convergence of content providers with sophisticated recommendation engines (e.g., NetFlix and YouTube) and Mobile Network Operators (MNO), in the context of RAN Sharing~\cite{liang2015wireless,CellSlice2013}. Furthermore, the idea of soft cache hits is complementary and can be applied \emph{on top} of existing proposals for edge caching, like the ones described earlier. In a recent preliminary work~\cite{sch-chants-2016}, we have considered the idea of soft cache hits in a DTN context with mobile relays.
Our goal in this paper is to develop the idea of soft cache hits in detail, applying it to standard mobile edge caching systems with cache cooperation (e.g.,~\cite{femto}). To our best knowledge, this is the first work to jointly consider related content recommendation/delivery gains and cache cooperation (e.g., femto-caching) gains. In this context, our main contributions are:
\begin{itemize}[leftmargin=*]
\item \emph{Soft Cache Hits (SCH) model:} We propose a generic model for mobile edge caching and alternative soft cache hits that can capture a number of interesting scenarios (both Fig.~\ref{fig:mobile-app-example} and Fig.~\ref{fig:mobile-app-example-second-model} - in Sections~\ref{sec:problem} and~\ref{sec:second-utility-model}, respectively).  
\item \emph{Single cache with SCH:} We formulate the problem of edge caching with SCH, in the context of a single cache. We show that the problem is NP-hard, and propose efficient approximation algorithms with provable performance (Section~\ref{sec:single}). 
\item \emph{Femto-caching with SCH:} We extend our framework to a femto-caching scenario with SCH, where each user might have access to more than one BS and local caches, as in~\cite{femto} (Section~\ref{sec:femto}). 
\item \emph{Validation:} We show using both synthetic data and a real dataset of YouTube related videos that additional caching gains, e.g., \emph{on top of what femto-caching provides}, could be achieved in practice (Section~\ref{sec:sims}).
\end{itemize}

Finally, we discuss related work and future research directions in Section~\ref{sec:related}, and conclude our paper in Section~\ref{sec:conclusions}.

\begin{figure}
\subfigure[]{\includegraphics[width=0.33\columnwidth]{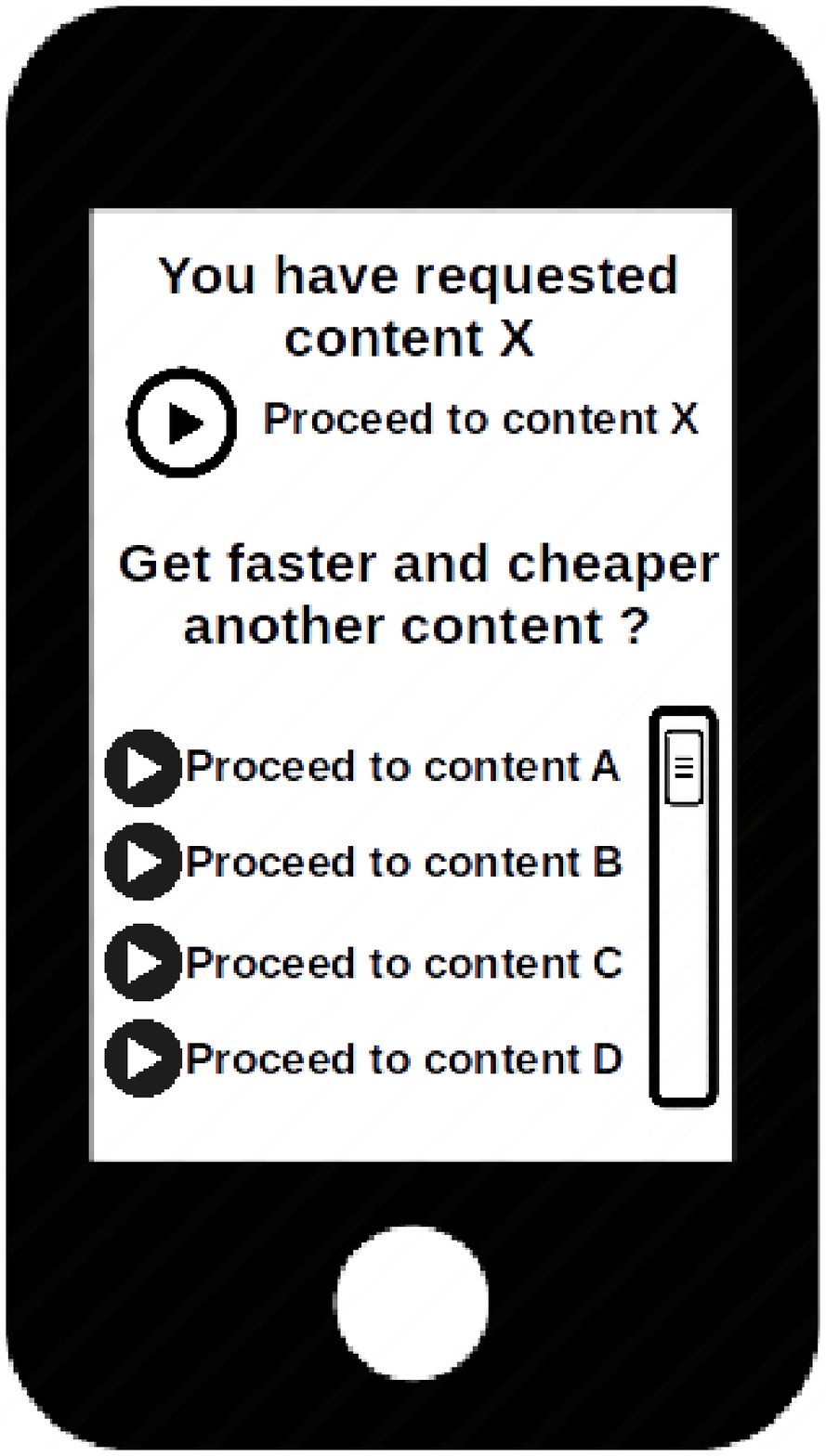}\label{fig:mobile-app-example}}
\hspace{0.1\linewidth}
\subfigure[]{\includegraphics[width=0.325\columnwidth]{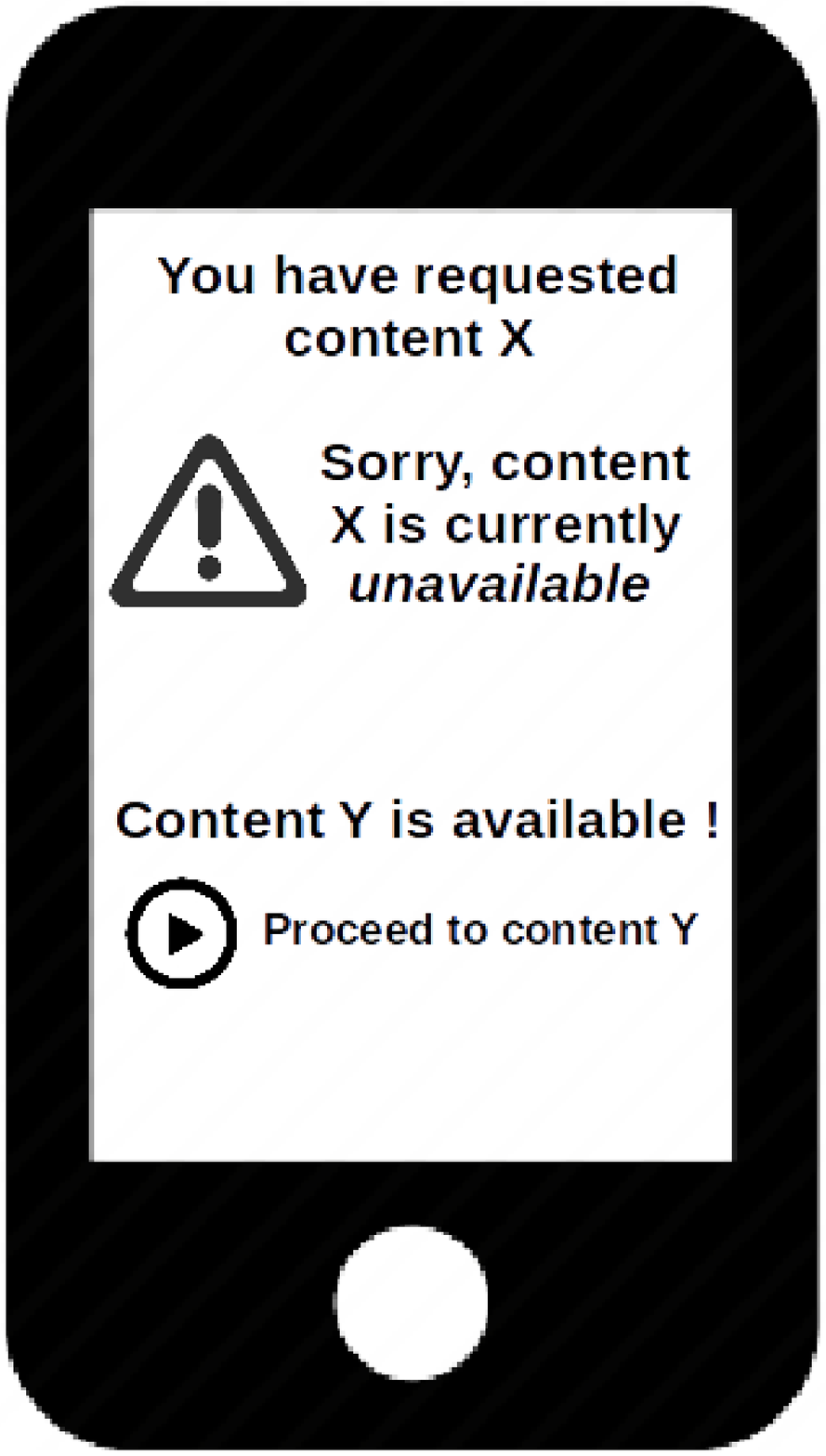}\label{fig:mobile-app-example-second-model}}
\caption{Mobile app example for \textit{Soft Cache Hits}: (a) related content \textit{recommendation} (that the user might not accept) , and (b) related content \textit{delivery}.}
\end{figure}

\section{Problem Setup}
\label{sec:problem}
\subsection{Network and Caching Model}
\textbf{Network Model}: Our network consists of a set of users $\mathcal{N}$ ($\| \mathcal{N} \| = N)$ and a set of small cells (SC) $\mathcal{M}$ ($\| \mathcal{M} \| = M)$. 
Users are mobile and the SCs with which they associate might change over time. Since the caching decisions are taken in advance (e.g., the night before, as in~\cite{femto,poularakis2014toc}, or once per few hours or several minutes) it is hard to know the exact SC(s) each user will be associated at the time she requests a content
. To capture user mobility, we propose a more generic model than the fixed bipartite graph of~\cite{femto}:
\begin{equation}
q_{ij} \doteq Prob\{\mbox{user i in range of SC j}\}, 
\end{equation} 
or, equivalently, the percentage of time a user $i$ spends in the coverage of SC $j$). Hence, deterministic $q_{ij}$ ($\in \{0,1\}$) captures the static setup of~\cite{femto}, while uniform $q_{ij}$ ($q_{ij} = q, \forall i,j$) represents the other extreme (no advance knowledge).


\textbf{Content Model}: We assume each user requests a content from a catalogue $\mathcal{K}$ with $\| \mathcal{K} \| = K$ contents. A user $i\in\mathcal{N}$ requests content $k \in \mathcal{K}$ with probability $p_{k}^{i}$.\footnote{This also generalizes the standard femto-caching model~\cite{femto} which assumes same popularity per user. We can easily derive such a popularity \emph{per content} $p_{k}$ from $p_{k}^{i}$.} We will initially assume that all contents have the same size, and relax the assumption later.

\textbf{Cache Model (Baseline)}: We assume that each SC is equipped with storage capacity of $C$ contents (all our proofs hold also for different cache sizes). We use the integer variable  $x_{kj} \in \{0,1\}$ to denote if content $k$ is stored in SC $j$. In the traditional caching model, which we will consider as our baseline, if a user $i$ requests a content $k$ which is stored in some nearby SC, then the content can be accessed directly from the local cache and a \emph{cache hit} occurs. This type of access is considered ``cheap'', while a \emph{cache miss} leads to an ``expensive'' access (e.g., over the SC backhaul and core network).


{For ease of reference, the notation is summarized in Table~\ref{table:notation}.}


\subsection{Soft Cache Hits} \label{sec:model-sch}

Up to this point the above model describes a baseline setup similar to the popular femto-caching framework~\cite{femto} (where we consider 0-1 cache hits, for simplicity, rather than access delay). The main departure in this paper is the following.

\textbf{Alternative Content Recommendation}: When the content a user asks for is not found in a local cache, a list of related contents out of the ones already cached is recommended to the user (see, e.g., Fig.~\ref{fig:mobile-app-example}). If a user selects one of them, a \textit{(soft) cache hit} occurs, otherwise there is a \textit{cache miss} and the network must fetch and deliver the original content.\footnote{Throughout our proofs, we assume, for simplicity, that the user can pick \emph{any} of the available cached contents; however, our analysis holds also when only a small subset of locally cached contents is recommended (e.g., the ones the recommender thinks are the most related for that user and for that original request).} Whether a user accepts an alternative content or not depends both on the content (how related it is) and on the user. This is captured in the following:


\begin{definition}\label{def:utility-as-probability}
A user $i$ that requests a content $k$ that is not available, accepts a recommended content $n$ with probability $u_{kn}^{i}$, where $0 \le u_{kn}^{i} \le 1$, and $u_{kk}^{i} = 1, \forall i,k$.
\end{definition}
The utilities define a content relation matrix $\mathbf{U^{i}} = \{u_{kn}^{i}\}$ for each user. Note that the above model captures the scenario of Fig.~\ref{fig:mobile-app-example}. We will use this model throughout Sections~\ref{sec:single} and ~\ref{sec:femto} to develop most of our theory. However, in Section~\ref{sec:second-utility-model}, we will modify our model to also analyze the scenario of Fig.~\ref{fig:mobile-app-example-second-model}, which we will refer to as \emph{Alternative Content Delivery}. 

Per user utilities $u_{kn}^{i}$ could be estimated from past statistics, and/or user profiles as usually done by standard recommendation algorithms~\cite{survey-collaborative-filtering}. In some cases, the system might have a coarser view of these utilities (e.g., item-item recommendation~\cite{amazon-recommendations}). We develop our theory and results the most generic case of Definition~\ref{def:utility-as-probability}, but we occasionally refer to the following two subcases:  
\begin{description}
\item[Sub-case 1:] The system does not know the exact utility $u_{kn}^{i}$ for each node $i$, but only how they are distributed among all nodes, i.e., the distributions $F_{kn}(x)\equiv P\{u_{kn}^{i}\leq x\}$.
\item[Sub-case 2:] The system knows only the \textit{average utility} $u_{kn}$ per content pair $\{k,n\}$.
\end{description}


\begin{table}
\caption{{Important Notation}}\label{table:notation}
\begin{small}
\begin{tabular}{|l|l|}
\hline
{$\mathcal{N}$}		&{set of users ($\| \mathcal{N} \| = N)$)}\\
\hline
{$\mathcal{M}$}		&{set of small cells (SC)  ($\| \mathcal{M} \| = M)$)}\\
\hline
{$C$}				&{storage capacity of a SC}\\
\hline
{$q_{ij}$}			&{probability user $i$ in range of SC $j$}\\
{}					&{(or, $q_{i}$ for the single-cache case)}\\
\hline
{$\mathcal{K}$}		&{set of contents ($\| \mathcal{K} \| = K)$)}\\
\hline
{$p_{k}^{i}$}		&{probability user $i$ to request content $k$}\\
\hline
{$x_{kj}$}			&{content $k$ is stored in SC $j$ ($x_{kj}=1$) or not ($x_{kj}=0$)}\\
{}					&{(or, $x_{k}$ for the single-cache case)}\\
\hline
{$u_{kn}^{i}$}		&{utility of content $n$ for a user $i$ requesting content $k$}\\
\hline
{$F_{kn}(x)$}		&{distribution of utilities $u_{kn}^{i}$, $F_{kn}(x) = P\{u_{kn}^{i}\leq x\}$}\\
\hline
{$u_{kn}$}			&{average utility for content pair $\{k,n\}$ (over all users)}\\
\hline
{$s_{k}$}			&{size of content $k$}\\\hline
\end{tabular}
\end{small}
\end{table}

\section{Single Cache with Soft Cache Hits}
\label{sec:single}

In order to better understand the impact of the related content matrices $\mathbf{U^{i}}$ on caching performance, we first consider a scenario where a user $i$ is served by a single base station, i.e., $\sum_j q_{ij} = 1, \forall i$ (i.e., each user is associated to exactly one BS, but we might still not know in advance which). Such a scenario is in fact relevant in today's networks, where the cellular network first chooses a single BS to associate a user to (e.g., based on signal strength), and then the user makes its request~\cite{LTE-Sesia}. In that case, we can optimize each cache independently. We can also drop the second index for both the storage variables $x_{kj}$ and connectivity variables $q_{ij}$, to simplify notation. 

\subsection{Soft Cache Hit Ratio}\label{sec:schr-single}

A request (from a user to the SC) for a content $k\in\mathcal{K}$ would result in a (standard) cache hit only if the node stores content $k$ in the cache, i.e., if $x_{k}=1$. Hence, the (baseline) \textit{cache hit ratio} for this request is simply
\begin{equation*}
CHR(k) = x_{k}
\end{equation*}

If we further allow for soft cache hits, the user might be also satisfied by receiving a different content $n\in\mathcal{K}$. The probability of this event is, by Definition~\ref{def:utility-as-probability}, equal to $u_{kn}^{i}$. The following Lemma derives the total cache hit rate in that case.
\begin{lemma}[Soft Cache Hit Ratio (SCHR)]\label{thm:lemma-schr-single}
Let $SCHR$ denote the expected cache hit ratio for a single cache (including regular and soft cache hits), among all users. Then, 
\begin{equation}\label{eq:schr-single-defintion}
SCHR = \sum_{i=1}^{N}\sum_{k=1}^{K} p_{k}^{i}\cdot q_{i} \cdot \left(1 - \prod_{n=1}^{K} \left(1- u_{kn}^{i} \cdot x_{n}\right)\right).
\end{equation}
\end{lemma}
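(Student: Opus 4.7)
My plan is to compute, for each user--content pair $(i,k)$, the probability that this pair generates a request that is served locally, and then sum these contributions by linearity of expectation to obtain $SCHR$.

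First I would identify the randomness underlying each term. A request from user $i$ contributes to the cache hit ratio only if the user happens to be in range of the SC (probability $q_i$) and chooses content $k$ (probability $p_k^i$); treating these two events as independent yields the prefactor $p_k^i \cdot q_i$. Conditional on such a request, a (soft) cache hit occurs if and only if the user accepts at least one of the contents currently stored in the cache. By Definition~\ref{def:utility-as-probability}, the user accepts a cached content $n$ with probability $u_{kn}^i$. Modeling the acceptance decisions across different candidate contents as mutually independent, the probability that the user rejects every cached alternative is
\begin{equation*}
\prod_{n:\,x_n=1}\bigl(1-u_{kn}^{i}\bigr) \;=\; \prod_{n=1}^{K}\bigl(1-u_{kn}^{i}\,x_n\bigr),
\end{equation*}
since uncached $n$ (with $x_n=0$) contribute a factor of one. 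Taking the complement yields precisely the bracketed term in~\eqref{eq:schr-single-defintion}.

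Next I would verify consistency with the baseline (hard) cache hit. If $k$ itself is cached, then $x_k=1$; since Definition~\ref{def:utility-as-probability} guarantees $u_{kk}^{i}=1$, the $n=k$ factor in the product is zero, so the bracketed expression collapses to $1$, recovering $CHR(k)=x_k$ as required. Summing the per-pair contribution $p_k^{i}\,q_i\bigl[1-\prod_{n}(1-u_{kn}^{i}x_n)\bigr]$ over $i\in\mathcal{N}$ and $k\in\mathcal{K}$ by linearity of expectation gives exactly~\eqref{eq:schr-single-defintion}.

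The main delicate point is essentially a modeling one rather than a technical one: making explicit the assumption that user $i$ evaluates each candidate recommendation independently, so that the joint ``reject all'' probability factorizes into $\prod_{n}(1-u_{kn}^{i}x_n)$. Once this independence assumption is declared, and once one notices that the convention $u_{kk}^{i}=1$ automatically absorbs the regular cache hit case into the same formula, the remaining derivation is a straightforward bookkeeping exercise with no further obstacles.
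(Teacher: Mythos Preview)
Your proposal is correct and follows essentially the same line as the paper's own proof: compute the per-request miss probability as the product $\prod_{n}(1-u_{kn}^{i}x_n)$, take the complement, and average over users and contents with weights $q_i$ and $p_k^i$. You are simply a bit more explicit than the paper about the independence assumption behind the factorization and about how the convention $u_{kk}^{i}=1$ subsumes the hard cache hit case, but the argument is the same.
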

\begin{proof}
The probability of satisfying a request for content $k$ by user $i$ with related content $n$ is $P\{n|k,i\} = u_{kn}^{i}\cdot x_{n}$, 
since $u_{kn}^{i}$ gives the probability of acceptance (by definition), and $x_{n}$ denotes if content $n$ is stored in the cache (if the content is not stored, then $P\{n|k,i\}=0$). Hence, it follows easily that the probability of a \textit{cache miss}, when content $k$ is requested by user $i$, is given by $\prod_{n=1}^{K}(1- u_{kn}^{i} \cdot x_{n})$. The complementary probability, defined as the \emph{soft cache hit ratio} (SCHR), is then
\begin{equation}\label{eq:def-schr}
SCHR(i,k,\mathbf{U}) = 1-\prod_{n=1}^{K}(1- u_{kn}^{i} \cdot x_{n}).
\end{equation}
Summing up over all users that might be associated with that BS (with probability $q_{i}$) and all contents that might be requested ($p_{k}^{i}$) gives us Eq.(\ref{eq:schr-single-defintion}) 
\end{proof}


Lemma~\ref{thm:lemma-schr-single} can be easily modified for the the sub-cases 1 and 2 of Def.~\ref{def:utility-as-probability} presented in Section~\ref{sec:model-sch}. We state the needed changes in Corollary~\ref{thm:corollary-single-cache-subcases}.
\begin{corollary}\label{thm:corollary-single-cache-subcases}
Lemma~\ref{thm:lemma-schr-single} holds for the the sub-cases 1 and 2 of Def.~\ref{def:utility-as-probability}, by substituting in the expression of \eq{eq:schr-single-defintion} the term $u_{kn}^{i}$ with 

\vspace{-\baselineskip}
{\small
\begin{align}
u_{kn}^{i} &\rightarrow E[u_{kn}^{i}]\equiv\int\left(1-F_{kn}(x)\right)dx &~\text{(for sub-case 1)}\\
u_{kn}^{i} &\rightarrow u_{kn} &~\text{(for sub-case 2)}
\end{align}
}
\end{corollary}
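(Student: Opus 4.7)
My plan is to start from the SCHR expression of Lemma~\ref{thm:lemma-schr-single} and re-derive it once for each sub-case. The key structural observation is that, for any fixed user $i$ and requested content $k$, the miss probability $\prod_{n=1}^{K}(1-u_{kn}^{i}\,x_{n})$ is a \emph{multilinear} function of the utility vector $(u_{k1}^{i},\dots,u_{kK}^{i})$, since the caching decisions $x_{n}$ are deterministic and each factor is affine in $u_{kn}^{i}$. This multilinearity is exactly what will let me substitute each $u_{kn}^{i}$ by its distributional expectation (sub-case~1) or by its population average (sub-case~2) without altering the form of \eq{eq:schr-single-defintion}.

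For sub-case~1, I would treat SCHR as a random variable and compute its expectation over the random utilities, under the natural modeling assumption that, for each $(i,k)$, the utilities $u_{k1}^{i},\dots,u_{kK}^{i}$ are mutually independent across $n$. Then linearity of expectation combined with independence gives
\[
E\!\left[\prod_{n=1}^{K}(1-u_{kn}^{i}\,x_{n})\right]=\prod_{n=1}^{K}\bigl(1-E[u_{kn}^{i}]\,x_{n}\bigr),
\]
so \eq{eq:schr-single-defintion} carries over verbatim with $u_{kn}^{i}$ replaced by $E[u_{kn}^{i}]$. Because $u_{kn}^{i}\in[0,1]$ by Definition~\ref{def:utility-as-probability}, the standard tail-integral identity yields $E[u_{kn}^{i}]=\int_{0}^{1}\bigl(1-F_{kn}(x)\bigr)\,dx$, which matches the claimed substitution.

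Sub-case~2 then follows as a degenerate special case: treating $F_{kn}$ as a point mass at $u_{kn}$ makes each $u_{kn}^{i}$ deterministic and equal to $u_{kn}$, so $E[u_{kn}^{i}]=u_{kn}$ and \eq{eq:schr-single-defintion} holds with $u_{kn}^{i}$ replaced by $u_{kn}$. Equivalently, if the system has no per-user information beyond the content-pair averages $u_{kn}$, plugging $u_{kn}$ into the multilinear product is the natural mean-field estimate consistent with the available information.

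The step I expect to be the main obstacle is justifying the cross-$n$ independence used in sub-case~1. Without it, the expectation of the multilinear miss-probability picks up cross-moments such as $E[u_{km}^{i}\,u_{kn}^{i}]$, which in general differ from $E[u_{km}^{i}]\,E[u_{kn}^{i}]$ and destroy the clean product form of Lemma~\ref{thm:lemma-schr-single}. I would therefore state independence across related contents as an explicit modeling assumption on the family $\{F_{kn}\}_{n=1}^{K}$, and note that, absent this assumption, the corollary should be read as a first-order (mean-field) approximation rather than an exact equality.
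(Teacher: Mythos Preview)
Your proposal is correct and mirrors the paper's own argument almost exactly: the paper also takes the expectation of \eq{eq:schr-single-defintion} over the random utilities, invokes independence of the $u_{kn}^{i}$ across $n$ to factor the expectation of the product, and then applies the tail-integral identity $E[u_{kn}^{i}]=\int (1-F_{kn}(t))\,dt$, while sub-case~2 is dismissed as immediate. The only notable difference is that you explicitly flag the cross-$n$ independence as a modeling assumption (and discuss what fails without it), whereas the paper simply asserts it in passing.
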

\begin{proof}
The proof is given in Appendix~\ref{app:corollary-single-subcases}.
\end{proof}


\subsection{Optimal SCH for Same Content Sizes}\label{sec:single-equal-size}

The (soft) cache hit ratio depends on the contents that are stored in a SC. The network operator can choose the storage variables $x_{k}$ to maximize SCHR by solving the following optimization problem. 


%

\begin{problem} \label{problem:single-cache}
The optimal cache placement problem for a single cache with soft cache hits and content relations described by the matrix $\mathbf{U} = \{{u_{kn}^{i}}\}$, is

\vspace{-\baselineskip}
{\small
\begin{eqnarray}
\underset{X = \{x_{1}, ..., x_{K}\}}{\mbox{maximize }} \;  f(X) = \sum_{i=1}^{N}\sum_{k=1}^{K} p_{k}^{i}\cdot q_{i} \cdot \left(1 - \prod_{n=1}^{K} \left(1- u_{kn}^{i} \cdot x_{n}\right)\right), 
\label{eq:objective-single-cache}\\
 \sum_{k =1}^{K} x_{k} \le C.
\label{eq:constraint-single-cache}
\end{eqnarray}
}
\end{problem}

In the following, we prove that the above optimization problem is NP-hard (Lemma~\ref{lemma:single-hardness}), and study the properties of the objective function \eq{eq:objective-single-cache} (Lemma~\ref{lemma:single-submodular}) that allow us to design an efficient approximate algorithm (Algorithm~\ref{alg:greedy-single-cache}) with provable performance guarantees (Theorem~\ref{thm:greedy}).

\begin{lemma}\label{lemma:single-hardness} The Optimization Problem~\ref{problem:single-cache} is NP-hard.
\end{lemma}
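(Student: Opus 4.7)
The plan is to reduce from the decision version of \emph{Maximum $k$-Coverage} (MaxCov), a classical NP-hard problem: given a universe $U=\{e_1,\ldots,e_L\}$, a family of subsets $S_1,\ldots,S_M\subseteq U$, and integers $C,B$, decide whether one can choose $C$ subsets whose union has cardinality at least $B$. The reduction I have in mind is natural: contents correspond to subsets and users to universe elements, while the utility matrix encodes the element--subset incidences.

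Concretely, given a MaxCov instance I would construct a single-cache SCH instance with $K=M$ contents, $N=L$ users, uniform popularities $p_k^i=1/K$ and access probabilities $q_i=1/N$, cache budget $C$, and utilities $u_{kn}^i=1$ whenever $e_i\in S_n$ (and zero otherwise for $k\ne n$), keeping the diagonal entries $u_{kk}^i=1$ forced by Definition~\ref{def:utility-as-probability}.

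For the analysis, I would fix a feasible cache $\mathcal{C}\subseteq\{1,\ldots,K\}$ with $|\mathcal{C}|=C$ and observe that the inner factor $1-\prod_n(1-u_{kn}^i x_n)$ equals $1$ iff either (i) $k\in\mathcal{C}$ (the forced diagonal fires) or (ii) some $n\in\mathcal{C}$ satisfies $e_i\in S_n$. Splitting users by whether they are covered in the MaxCov sense, a short calculation yields
\[
f(X)=\frac{C}{K}+\left(1-\frac{C}{K}\right)\cdot\frac{\left|\bigcup_{n\in\mathcal{C}}S_n\right|}{N}.
\]
For $C<K$ this is a strictly increasing affine function of the coverage, so an optimal cache corresponds exactly to an optimal MaxCov solution and vice versa; since the construction is polynomial, NP-hardness of Problem~\ref{problem:single-cache} follows.

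The only delicate point, which I expect to be the main obstacle, is handling the mandatory diagonal $u_{kk}^i=1$: a priori it might bias the optimum away from purely ``set-based'' choices by making self-coverage dominate. The identity above shows that this bias contributes a \emph{constant} $C/K$ independent of $\mathcal{C}$, so the identity of the optimum is unaffected. If a variant of the reduction lost this decoupling, a safe fallback would be to append a block of zero-popularity dummy contents or to scale $p_k^i$ nonuniformly so that the set-coverage contribution strictly dominates self-coverage; the uniform construction above already avoids the need for such devices.
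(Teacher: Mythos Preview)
Your reduction is correct, and your treatment of the forced diagonal is the right one: after restricting to caches with $|\mathcal{C}|=C$ (justified by monotonicity of $f$, since adding a content never decreases the objective), the self-coverage term contributes exactly $C/K$ regardless of \emph{which} $C$ contents are chosen, so maximizing $f$ is equivalent to maximizing the coverage $\bigl|\bigcup_{n\in\mathcal{C}}S_n\bigr|$ whenever $C<K$.

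The paper also reduces from Maximum Coverage, but through a different encoding. It takes the utilities to be \emph{user-independent} and $\{0,1\}$-valued, so that each content $k$ determines a set of \emph{other contents} whose requests it can satisfy; the universe elements are then the contents themselves (weighted by their aggregate request probabilities $\sum_i p_n^i q_i$), and caching a content corresponds to selecting its associated set. In your construction, by contrast, contents act only as sets while the \emph{users} serve as universe elements, with incidence encoded through the per-user utilities $u_{kn}^i$. Both routes are valid, but the paper's buys something slightly stronger: hardness already holds in the restricted sub-case~2 of Definition~\ref{def:utility-as-probability} where all users share a common utility matrix. Your construction genuinely relies on user-dependent utilities, so it proves hardness only for the general model; on the other hand, your explicit affine formula for $f$ makes the decoupling of the diagonal term completely transparent, which the paper's argument does not spell out.
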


\begin{lemma}\label{lemma:single-submodular}
The objective function of Eq.(\ref{eq:objective-single-cache}) is submodular and monotone.
\end{lemma}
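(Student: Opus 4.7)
The plan is to reduce the claim to a well-known fact about probabilistic coverage functions: namely, that $g(X) = 1 - \prod_{n \in X}(1 - w_n)$ with weights $w_n \in [0,1]$ is monotone and submodular. Since $f(X)$ in \eqref{eq:objective-single-cache} is a nonnegative linear combination (with coefficients $p_k^i q_i \ge 0$) of such functions — one per $(i,k)$ pair, with weights $w_n = u_{kn}^{i} \in [0,1]$ — and both properties are preserved under nonnegative linear combinations, this will conclude the proof.

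Concretely, for each $(i,k)$ I would define
\begin{equation*}
g_{ik}(X) \;=\; 1 - \prod_{n=1}^{K}\bigl(1 - u_{kn}^{i} x_n\bigr),
\end{equation*}
where I identify the 0-1 vector $X=\{x_1,\dots,x_K\}$ with the set $S=\{n : x_n=1\}\subseteq \mathcal{K}$. Equivalently, $g_{ik}(S) = 1 - \prod_{n \in S}(1 - u_{kn}^{i})$.

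For monotonicity, I would pick any $S \subseteq \mathcal{K}$ and any $m \notin S$ and compute the marginal gain
\begin{equation*}
\Delta_{ik}(m \mid S) \;=\; g_{ik}(S\cup\{m\}) - g_{ik}(S) \;=\; u_{km}^{i}\cdot \prod_{n \in S}\bigl(1 - u_{kn}^{i}\bigr).
\end{equation*}
Since every factor lies in $[0,1]$, this marginal is nonnegative, proving $g_{ik}$ is monotone.

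For submodularity, I would use the diminishing-returns characterization: for any $S \subseteq T \subseteq \mathcal{K}$ and any $m \notin T$, I need $\Delta_{ik}(m\mid S) \ge \Delta_{ik}(m \mid T)$. Plugging in the expression above,
\begin{equation*}
\Delta_{ik}(m\mid S) - \Delta_{ik}(m\mid T) \;=\; u_{km}^{i}\Bigl(\prod_{n\in S}(1-u_{kn}^{i}) - \prod_{n\in T}(1-u_{kn}^{i})\Bigr) \;\ge\; 0,
\end{equation*}
because the product over the larger set $T$ is a product of more factors in $[0,1]$, hence not larger. This shows each $g_{ik}$ is submodular. Finally, since $f(X) = \sum_{i,k} p_k^{i} q_i\, g_{ik}(X)$ and $p_k^{i} q_i \ge 0$, monotonicity and submodularity are preserved, yielding the claim.

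I do not expect any real obstacle: the argument is a textbook reduction to probabilistic coverage. The only thing to be careful about is (i) correctly identifying the 0-1 vector $X$ with a subset so that the set-function definitions apply, and (ii) noting that $u_{kn}^{i} \in [0,1]$ (given by Definition~1) is what makes each factor $(1-u_{kn}^{i})$ lie in $[0,1]$ and hence the product monotone in the index set.
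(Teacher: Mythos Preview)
Your proof is correct and follows essentially the same route as the paper: both identify the $0$--$1$ vector with a subset, compute the marginal gain $u_{km}^{i}\prod_{n\in S}(1-u_{kn}^{i})$, and use that products of factors in $[0,1]$ shrink as the index set grows. The only cosmetic difference is that you decompose $f$ into the per-$(i,k)$ coverage functions $g_{ik}$ and invoke closure of monotonicity/submodularity under nonnegative linear combinations, whereas the paper carries the outer sum $\sum_{i,k} p_k^{i} q_i$ through the same algebra directly.
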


The proofs for the previous two Lemmas can be found in Appendices~\ref{app:single-hardness} and ~\ref{app:single-submodular}, respectively.

We propose Algorithm~\ref{alg:greedy-single-cache} as a greedy algorithm for Optimization Problem~\ref{problem:single-cache}: to select the contents to be stored in the cache, we start from an empty cache (line~1), and start filling it (one by one) with the content that increases the most the value of the objective function (line~4), till the cache is full. The computation complexity of the algorithm is $O\left(C\cdot K\right)$, 
since the loop (lines~2-6) denotes $C$ repetitions, and in each repetition the objective function is evaluated $y$ times, where $K\geq y\geq K-C+1$.

The following theorem gives the performance bound for Algorithm~\ref{alg:greedy-single-cache}. 
\begin{theorem}\label{thm:greedy}
Let $OPT$ be the optimal solution of the Optimization Problem~\ref{problem:single-cache}, and $S^{*}$ the output of Algorithm~\ref{alg:greedy-single-cache}. Then, it holds that
\begin{equation}
f(S^{*}) \geq \left(1-\frac{1}{e}\right) \cdot OPT
\end{equation}
\end{theorem}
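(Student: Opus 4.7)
The plan is to recognize that Theorem~\ref{thm:greedy} is exactly in the setting of the classical Nemhauser--Wolsey--Fisher result: maximizing a monotone submodular set function subject to a uniform matroid (cardinality) constraint $|S|\le C$. Lemma~\ref{lemma:single-submodular} has already given us the two structural properties (monotonicity and submodularity) needed, and Algorithm~\ref{alg:greedy-single-cache} is the standard greedy rule of repeatedly adding the element with maximum marginal gain. So conceptually the statement follows from invoking that theorem; my job is simply to translate it into this notation.

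To make the paper self-contained, I would reproduce the short inductive argument. Let $O$ denote the optimal set of $C$ contents (so $f(O)=OPT$), and let $S_\ell=\{s_1,\ldots,s_\ell\}$ denote the partial greedy solution after $\ell$ iterations of Algorithm~\ref{alg:greedy-single-cache}, with $S_0=\emptyset$ and $S^*=S_C$. Write the marginal gain as $\Delta(e\mid S)=f(S\cup\{e\})-f(S)$. The proof would proceed in three short steps:
\begin{enumerate}
\item (Monotonicity.) $f(O)\le f(O\cup S_\ell)$.
\item (Submodularity + telescoping.) Enumerating the elements of $O\setminus S_\ell=\{o_1,\ldots,o_r\}$ with $r\le C$,
\[
f(O\cup S_\ell)-f(S_\ell)\;\le\;\sum_{j=1}^{r}\Delta(o_j\mid S_\ell),
\]
by submodularity applied to each telescoping term.
\item (Greedy choice.) Since $s_{\ell+1}$ was chosen to maximize the marginal gain over all contents (including every $o_j\in O\setminus S_\ell$), each term on the right is at most $\Delta(s_{\ell+1}\mid S_\ell)=f(S_{\ell+1})-f(S_\ell)$, so the sum is at most $C\cdot(f(S_{\ell+1})-f(S_\ell))$.
\end{enumerate}
Combining these yields the key recursion
\[
f(O)-f(S_\ell)\;\le\;C\,\bigl(f(S_{\ell+1})-f(S_\ell)\bigr),
\]
or equivalently $f(O)-f(S_{\ell+1})\le(1-\tfrac{1}{C})(f(O)-f(S_\ell))$.

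Iterating this inequality $C$ times starting from $f(S_0)=0$ gives $f(O)-f(S^*)\le(1-\tfrac{1}{C})^{C}f(O)$, so
\[
f(S^*)\;\ge\;\left(1-\left(1-\tfrac{1}{C}\right)^{C}\right)f(O)\;\ge\;\left(1-\tfrac{1}{e}\right)OPT,
\]
which is the desired bound. There is no real obstacle here; the only thing to be careful about is that Algorithm~\ref{alg:greedy-single-cache} is allowed to terminate after exactly $C$ additions (so the constraint is tight), and that the marginal gains are always non-negative, both of which follow immediately from monotonicity and the fact that contents are unit-sized in this section. The generalization to unequal sizes (deferred later in the paper) would instead require the knapsack-constrained variant of the greedy argument, but that is outside the scope of this theorem.
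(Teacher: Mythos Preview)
Your proposal is correct and takes essentially the same approach as the paper: both recognize that, by Lemma~\ref{lemma:single-submodular}, the problem is monotone submodular maximization under a cardinality constraint, and appeal to the Nemhauser--Wolsey--Fisher $(1-1/e)$ guarantee for greedy. The paper simply cites this classical result, while you additionally spell out the standard inductive argument, which is fine and makes the proof self-contained.
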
 
\begin{proof}
Lemma~\ref{lemma:single-submodular} shows that the Optimization Problem~\ref{problem:single-cache} belongs to the generic category of maximization of submodular and monotone functions (\eq{eq:objective-single-cache}) with a cardinality constraint (\eq{eq:constraint-single-cache}). For such problems, it is known that the greedy algorithm achieves (in the worst case) a $\left(1-\frac{1}{e}\right)$-approximation solution~\cite{nemhauser-greedy-approximation,krause2012submodular}.
\end{proof}

While the above is a strict worst case bound, it is known that greedy algorithms perform quite close to the optimal in most scenarios. In Section~\ref{sec:sims} we show that this simple greedy algorithm can already provide interesting performance gains.



\begin{algorithm}[t]
\caption{Greedy Algorithm for Optimization Problem~\ref{problem:single-cache}.\\\textit{computation complexity: $O\left(C\cdot K\right)$}}
\begin{algorithmic}[1]
\Statex \textbf{Input:} \textit{utility $\{u_{kn}^{i}\}$}, \textit{content demand $\{p_{k}^{i}\}$}, \textit{mobility $\{q_{i}\}$}, $\forall k,n\in\mathcal{K},~i\in\mathcal{N}$
\State $S_{0}\leftarrow\emptyset;~t\leftarrow 0$
\While {$t < C$}
	\State $t\leftarrow t+1$
	\State $n \leftarrow  \underset{\ell \in K \setminus S_{t-1}}{argmax}f(S_{t-1} \cup \{\ell\}) $
	\State $S_{t} \leftarrow  S_{t-1} \cup \{n\},$
\EndWhile
\State $S^{*}\leftarrow S_{t}$
\State\Return $S^{*}$
\end{algorithmic}
\label{alg:greedy-single-cache}
\end{algorithm}

\subsection{Optimal SCH for Different Content Sizes}\label{sec:single-different-size}

Till now we have assumed that all contents have equal size. In practice, each content has a different size $s_{k}$ and the capacity $C$ of each cache must be expressed in Bytes. 
Additionally, if a user requests a video of duration $X$ and she should be recommended an alternative one of similar duration $Y$ (note that similar duration does not always mean similar size). While the latter could still be taken care of by the recommendation system (our study of a real dataset in Section~\ref{sec:sims} suggests that contents of different sizes might still be tagged as related), we need to revisit the optimal allocation problem:  the capacity constraint of Eq.(\ref{eq:constraint-single-cache}) is no longer valid, and Algorithm~\ref{alg:greedy-single-cache} can perform arbitrarily bad.


\begin{problem}\label{problem:single-knapsack}
The optimal cache placement problem for a single cache with soft cache hits and variable content sizes, and content relations described by the matrix $\mathbf{U} = \{u_{kn}^{i}\}$ is

\vspace{-\baselineskip}
{\small
\begin{eqnarray}
\underset{X = \{x_{1}, ..., x_{K}\}}{\mbox{maximize }} \; f(X) = \sum_{i=1}^{N}\sum_{k=1}^{K} p_{k}^{i}\cdot q_{i} \cdot \left(1 - \prod_{j=1}^{K} \left(1- u_{kn}^{i} \cdot x_{n}\right)\right), \\
\sum_{k =1}^{K} s_{k} x_{k} \leq C.
\end{eqnarray}
}
\end{problem}

\emph{Remark:} Note that the objective is still in terms of cache hit ratio, and does not depend on content size. This could be relevant when the operator is doing edge caching to reduce \emph{access latency} to contents (latency is becoming a core requirement in 5G), in which case a cache miss will lead to long access latency (to fetch the content from deep inside the network), for both small and large contents. 

The problem is a set cover problem variant with a knapsack type constraint. We propose approximation Algorithm~\ref{alg:fast-single-knapsack} for this problem, which is a ``fast greedy'' algorithm (based on a modified version of the greedy Algorithm~\ref{alg:greedy-single-cache}) and has complexity $O\left(K^{2}\right)$. 



\begin{theorem}\label{lemma:single-knapsack}~\\
\noindent(1) The Optimization Problem~\ref{problem:single-knapsack} is NP-hard. \\
\noindent(2) Let $OPT$ be the optimal solution of the Optimization Problem~\ref{problem:single-knapsack}, and $S^{*}$ the output of Algorithm~\ref{alg:fast-single-knapsack}. Then, it holds that
\begin{equation}
f(S^{*}) \geq \frac{1}{2}\left(1-\frac{1}{e}\right) \cdot OPT
\end{equation}
\end{theorem}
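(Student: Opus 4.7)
My plan is a direct reduction from Optimization Problem~\ref{problem:single-cache}, which Lemma~\ref{lemma:single-hardness} already shows is NP-hard. Given any instance of Problem~\ref{problem:single-cache} with cache capacity $C$, I would construct an instance of Problem~\ref{problem:single-knapsack} by keeping every quantity ($\mathbf{U}$, $\{p_k^i\}$, $\{q_i\}$, $C$) unchanged and setting $s_k = 1$ for all $k \in \mathcal{K}$. The knapsack constraint $\sum_k s_k x_k \le C$ then collapses to the cardinality constraint $\sum_k x_k \le C$, and the two objectives coincide. Any solver for Problem~\ref{problem:single-knapsack} therefore also solves Problem~\ref{problem:single-cache}, so the former is at least as hard.

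\textbf{Plan for Part (2) --- approximation guarantee.} The objective of Problem~\ref{problem:single-knapsack} is syntactically identical to that of Problem~\ref{problem:single-cache}, so Lemma~\ref{lemma:single-submodular} immediately gives that $f$ is monotone and submodular in the binary vector $X$. The new difficulty is that the constraint is now a knapsack, so the clean $(1-1/e)$ bound from Theorem~\ref{thm:greedy} is unavailable. The natural ``fast greedy'' modification of Algorithm~\ref{alg:greedy-single-cache} is to replace the rule ``pick the item of largest marginal gain'' by ``pick the feasible item of largest \emph{density} $[f(S_{t-1}\cup\{n\}) - f(S_{t-1})]/s_n$'', iterating until no further item fits. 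I read Algorithm~\ref{alg:fast-single-knapsack} in this way; since density-greedy alone can be arbitrarily bad on a single high-value bulky item, Algorithm~\ref{alg:fast-single-knapsack} must additionally compare against the best feasible singleton and return the better of the two solutions, which is consistent with the announced $O(K^2)$ complexity.

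With that reading, the proof is the classical Khuller--Moss--Naor / Sviridenko argument for monotone submodular maximization subject to a knapsack constraint. Let $S_G$ denote the density-greedy solution, $s^{*}$ the best feasible singleton, and $OPT$ an optimal solution of Problem~\ref{problem:single-knapsack}. Using submodularity, monotonicity and a telescoping sum over the greedy insertions --- together with the standard handling of the first item that would overflow the capacity and is therefore discarded --- one obtains
\begin{equation*}
f(S_G) + f(s^{*}) \;\ge\; \left(1 - \frac{1}{e}\right)\cdot f(OPT).
\end{equation*}
Taking the maximum of the two yields $f(S^{*}) \ge \tfrac{1}{2}(1 - 1/e)\cdot f(OPT)$, which is exactly the claim. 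The submodularity assertions needed at every line are already available from Lemma~\ref{lemma:single-submodular}.

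\textbf{Where the work is.} Part~(1) is one sentence. The substantive content lies in Part~(2), and within it the only non-trivial step is the telescoping inequality that bounds $f(OPT) - f(S_G)$ in terms of the densities picked by the greedy rule: this is where monotonicity, submodularity, and the fact that the first discarded candidate would have pushed the cache over capacity must be combined carefully. Fortunately this is a standard calculation in the submodular-optimization literature, so I would cite~\cite{krause2012submodular} (already referenced in the paper) rather than redo it, and focus the written proof on verifying that our $f$ satisfies its hypotheses and that Algorithm~\ref{alg:fast-single-knapsack} instantiates the required density-greedy-plus-singleton template.
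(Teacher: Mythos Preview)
Your overall strategy matches the paper's: both parts ultimately rest on Lemma~\ref{lemma:single-submodular} (monotone submodularity of $f$) and then invoke a known ``fast greedy'' result for submodular maximization under a knapsack constraint. Two points of divergence are worth noting.

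\emph{Part~(1).} Your reduction (set $s_k=1$ and invoke Lemma~\ref{lemma:single-hardness}) is correct and in fact cleaner than the paper's argument, which instead reduces a $\{0,1\}$-utility instance directly to the \emph{budgeted} maximum coverage problem of~\cite{budget-max-cover}. Either route works.

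\emph{Part~(2).} Here you have misread Algorithm~\ref{alg:fast-single-knapsack}. The second candidate $S^{(2)}$ is \emph{not} the best feasible singleton; it is the output of \textproc{ModifiedGreedy} called with unit weights $[1,\dots,1]$, i.e., the plain value-greedy solution that ignores sizes when ranking items. Accordingly, the paper does not appeal to the Khuller--Moss--Naor singleton argument but to the Leskovec~et~al.\ result~\cite{leskovec-max-submodular-knapsack}, which states that for monotone submodular $f$ under a knapsack constraint, the better of density-greedy and value-greedy is a $\tfrac{1}{2}(1-1/e)$-approximation. Your inequality $f(S_G)+f(s^{*})\ge(1-1/e)\,OPT$ therefore proves the bound for a \emph{different} (also valid) algorithm, not for the one actually stated. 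The repair is trivial---replace ``best singleton'' by ``value-greedy output'' and cite~\cite{leskovec-max-submodular-knapsack}---but as written your proof does not establish the theorem for Algorithm~\ref{alg:fast-single-knapsack} as defined in the paper.
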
 
\begin{proof}
The proof can be found in Appendix~\ref{app:single-knapsack}.
\end{proof}

In fact, a polynomial algorithm with better performance $\left(1-\frac{1}{e}\right)$-approximation could be described, based on~\cite{max-submodular-knapsack}. However, the improved performance guarantees come with a significant increase in the required computations, $O\left(K^{5}\right)$, which might not be feasible in a practical scenario when the catalog size $K$ is large. We therefore just state its existence, and do not consider the algorithm further in this paper (the algorithm can be found in Appendix~\ref{app:algorithm-single-1/e}).

\begin{algorithm}[t]
\caption{$\frac{1}{2}\cdot \left(1-\frac{1}{e}\right)$-approximation Algorithm for Optimization Problem~\ref{problem:single-knapsack}.\\{\textit{computation complexity: $O\left(K^{2}\right)$}}}
\begin{algorithmic}[1]
\Statex {\textbf{Input:} \textit{utility $\{u_{kn}^{i}\}$}, \textit{content demand $\{p_{k}^{i}\}$}, \textit{content size $\{s_{k}\}$}, \textit{mobility $\{q_{i}\}$}, $\forall k,n\in\mathcal{K},~i\in\mathcal{N}$}
\State $S^{(1)}\leftarrow$\textproc{ModifiedGreedy}($\emptyset$,[$s_{1}$, $s_{2}$,...,$s_{k}$])
\State $S^{(2)}\leftarrow$\textproc{ModifiedGreedy}($\emptyset$,[$1$, $1$,...,$1$])
\If{$f(S^{(1)})>f(S^{(2)})$}
	\State $S^{*}\leftarrow  S^{(1)}$
\Else
	\State $S^{*}\leftarrow  S^{(2)}$
\EndIf
\State\Return $S^{*}$
\Statex {}
\Function{ModifiedGreedy}{$S_{0}$,[$w_{1}$, $w_{2}$,...,$w_{k}$]}
\State $\mathcal{K}^{(1)}\leftarrow \mathcal{K};~c\leftarrow 0;~t\leftarrow 0$
\While {$\mathcal{K}^{(1)} \neq \emptyset$}
	\State $t\leftarrow t+1$
	\State $n \leftarrow  \underset{\ell \in K \setminus S_{t-1}}{argmax} \frac{f(S_{t-1} \cup \{\ell\})}{w_{\ell}} $
	\If {$c+w_{n}\leq C$}
		\State $S_{t}\leftarrow  S_{t-1} \cup \{n\}$
		\State $c\leftarrow c+w_{n}$
	\Else 
		\State $S_{t}\leftarrow  S_{t-1}$
	\EndIf
	\State $\mathcal{K}^{(1)} \leftarrow \mathcal{K}^{(1)}\backslash\{n\}$ 
\EndWhile
\State \Return $\leftarrow S_{t}$
\EndFunction
\end{algorithmic}
\label{alg:fast-single-knapsack}
\end{algorithm}

\section{Femtocaching with Related Content Recommendation}
\label{sec:femto}



Building on the results and analytical methodology of the previous section for the optimization of a single cache with soft cache hits, we now extend our setup to consider the complete problem with cache overlaps (referred to as ``femtocaching''~\cite{femto}). Note however that we \textit{do} consider user mobility, through variables $q_{ij}$, unlike some works in this framework that often assume static users. Due to space limitations, we focus on the case of fixed content sizes.

In this scenario, a user $i\in\mathcal{N}$ might be covered by more than one small cells $j \in \mathcal{M}$, i.e. $\sum_{j} q_{ij} \ge 1, \forall i$. A user is satisfied, if she receives the requested content $k$ or \emph{any} other related content (that she will accept), from \emph{any} of the SCs within range. Hence, similarly to \eq{eq:def-schr}, the total cache hit rate SCHR (that includes regular and soft cache hits) can be written as
\begin{equation}
SCHR(i,k,\mathbf{U}) =  1 - \prod_{j=1}^{M} \prod_{n=1}^{K} \left(1- u_{kn}^{i}\cdot x_{nj}\cdot q_{ij}\right)
\end{equation}
since for a cache hit a user $i$ needs to be in the range of a SC $j$ (term $q_{ij}$) that stores the content $n$ (term $x_{nj}$), and accept the recommended content (term $u_{kn}^{i}$).

Considering (i) the request probabilities {$p_{k}^{i}$}, (ii) every user in the system, and (iii) the capacity constraint, gives us the following optimization problem.

\begin{problem}\label{problem:femto-related}
The optimal cache placement problem for the femtocaching scenario with soft cache hits and content relations described by $\mathbf{U} = \{u_{kn}\}$ is

\vspace{-\baselineskip}
{\small
\begin{align}
\underset{X = \{x_{11}, ..., x_{KM}\}}{\mbox{maximize }}  f(X) = \sum_{i=1}^{N} \sum_{k=1}^{K}   &   p_{k}^{i} \left(1 - \prod_{j=1}^{M} \prod_{n=1}^{K} \left(1- u_{kn}^{i}\cdot x_{nj}\cdot q_{ij}\right)\right), \label{eq:objective-femto-related}\\
&  \sum_{k =1}^{K} x_{kj} \leq C, ~~~\forall j \in \mathcal{M}.\label{eq:constraint-femto-related}
\end{align}
}
\end{problem}

The following lemma states the complexity of the above optimization problem, as well as its characteristics that allow us to design an efficient approximation algorithm. 
\begin{lemma}\label{lemma:femto-np-matroid}~\\
(1) The Optimization Problem~\ref{problem:femto-related} is NP-hard, \\
(2) with submodular and monotone objective function (\eq{eq:objective-femto-related}) and a matroid constraint (\eq{eq:constraint-femto-related}).
\end{lemma}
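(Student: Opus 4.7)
My plan is to handle the three claims (NP-hardness, submodularity/monotonicity, and matroid structure) separately, reusing the machinery from Section~\ref{sec:single} as much as possible.

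For NP-hardness, the natural approach is a reduction from Optimization Problem~\ref{problem:single-cache}, already shown NP-hard in Lemma~\ref{lemma:single-hardness}. Concretely, any instance of Problem~\ref{problem:single-cache} is exactly the restriction of Problem~\ref{problem:femto-related} to $M=1$, $q_{i1}=q_i$. The objective \eqref{eq:objective-femto-related} and the capacity constraint \eqref{eq:constraint-femto-related} collapse to their single-cache counterparts, and a polynomial-time solver for the femto-caching version would immediately solve the single-cache version. No further reduction is needed.

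For the structural part, I would first recast the problem on the ground set $\mathcal{E} = \mathcal{K}\times\mathcal{M}$ of (content, cell) pairs, identifying a placement $X$ with the set $S=\{(n,j): x_{nj}=1\}$. Writing $a^{i,k}_{n,j}=u^{i}_{kn}\cdot q_{ij}\in[0,1]$, the objective becomes
\begin{equation*}
f(S)=\sum_{i=1}^{N}\sum_{k=1}^{K} p_{k}^{i}\,\Bigl(1-\!\!\prod_{(n,j)\in\mathcal{E}}\!\bigl(1-a^{i,k}_{n,j}\,\mathbb{1}[(n,j)\in S]\bigr)\Bigr).
\end{equation*}
Monotonicity is immediate: adding a pair $(n^{*},j^{*})$ to $S$ multiplies the product by an additional factor $\le 1$, so each inner term $(1-\prod\cdots)$ is non-decreasing, and hence so is $f$. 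For submodularity, the marginal gain of adding $(n^{*},j^{*})\notin S$ in each $(i,k)$-summand equals
\begin{equation*}
a^{i,k}_{n^{*},j^{*}}\cdot\!\!\prod_{(n,j)\in S}\!\!\bigl(1-a^{i,k}_{n,j}\bigr),
\end{equation*}
which is non-increasing in $S$ since every factor lies in $[0,1]$. This yields the diminishing-returns property for each $(i,k)$-summand, and $f$ is a non-negative combination of submodular functions, hence submodular. This is essentially the same argument used to prove Lemma~\ref{lemma:single-submodular}, just lifted to the product ground set, which is where I expect the only subtlety: making sure the two-index products are correctly bookkept so that the telescoping of the marginal gains goes through.

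Finally, for the matroid constraint, I would observe that \eqref{eq:constraint-femto-related} partitions the ground set $\mathcal{E}$ into the $M$ blocks $\mathcal{E}_{j}=\{(n,j):n\in\mathcal{K}\}$, with the cardinality bound $|S\cap\mathcal{E}_{j}|\le C$ for each $j$. This is the standard definition of a partition matroid over $(\mathcal{E},\{\mathcal{E}_{j}\})$ with capacities $C$, so the feasible region of Problem~\ref{problem:femto-related} is exactly the independent sets of this matroid. Combining the three pieces gives both parts of the lemma, and sets up the application of the standard $(1-1/e)$-type guarantees for maximizing monotone submodular functions over a matroid as the natural next step.
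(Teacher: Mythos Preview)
Your proof is correct and covers all three claims cleanly. The submodularity/monotonicity argument is essentially identical to the paper's: both lift the single-cache computation of Lemma~\ref{lemma:single-submodular} to the product ground set $\mathcal{K}\times\mathcal{M}$ and compute the marginal gain $u^{i}_{k\ell}\,q_{im}\prod_{(n,j)\in S}(1-u^{i}_{kn}q_{ij})$, which is exactly your $a^{i,k}_{n^{*},j^{*}}\prod_{(n,j)\in S}(1-a^{i,k}_{n,j})$.

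Where you differ is in the other two pieces, and in both cases your route is shorter. For NP-hardness, the paper redoes the reduction to weighted maximum coverage (now with ground set $\mathcal{K}\times\mathcal{M}$ and element weights indexed by $\mathcal{K}\times\mathcal{N}$), whereas you simply observe that Problem~\ref{problem:single-cache} is the $M=1$ special case of Problem~\ref{problem:femto-related}; since Lemma~\ref{lemma:single-hardness} already established hardness of the former, you inherit it for free. For the constraint, the paper verifies the hereditary and exchange axioms of a matroid from scratch, while you recognize \eqref{eq:constraint-femto-related} as a partition matroid on the blocks $\mathcal{E}_{j}=\{(n,j):n\in\mathcal{K}\}$ with uniform capacity $C$ and invoke that standard fact directly. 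Both of your shortcuts are valid and buy brevity without any loss of rigor; the paper's more explicit arguments have the minor advantage of being self-contained for a reader unfamiliar with partition matroids.
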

\begin{proof}
The proof is given in Appendix~\ref{app:femto-NP-matroid}.
\end{proof}

Lemma~\ref{lemma:femto-np-matroid} states that the Optimization Problem~\ref{problem:femto-related} is a maximization problem with a submodular function and a matroid constraint. For this type of problems, a greedy algorithm can guarantee an $\frac{1}{2}$-approximation of the optimal solution~\cite{krause2012submodular}. We propose such a greedy algorithm in Algorithm~\ref{alg:greedy-femto}, which is of computational complexity $O\left(K^{2} M^{2}\right)$.

\begin{theorem}\label{th:femto-related}
Let $OPT$ be the optimal solution of the Optimization Problem~\ref{problem:femto-related}, and $S^{*}$ the output of Algorithm~\ref{alg:greedy-femto}. Then, it holds that
\begin{equation}
f(S^{*}) \geq \frac{1}{2}\cdot OPT\\
\end{equation}
\end{theorem}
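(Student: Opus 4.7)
The plan is to reduce the statement to the classical Fisher--Nemhauser--Wolsey theorem on greedy maximization of monotone submodular functions under matroid constraints, exactly in the spirit of the proof of Theorem~\ref{thm:greedy} for the single-cache case, but with the cardinality constraint replaced by a matroid constraint.

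First, I would invoke Lemma~\ref{lemma:femto-np-matroid} to establish the two structural prerequisites: (i) the objective in \eq{eq:objective-femto-related} is monotone and submodular in the ground set of (content, SC) pairs, and (ii) the per-SC capacity constraints in \eq{eq:constraint-femto-related} form a matroid. Concretely, the constraint set is the partition matroid on the ground set $\mathcal{K} \times \mathcal{M}$ whose blocks are $\{(k,j) : k \in \mathcal{K}\}$ for each $j \in \mathcal{M}$, each of rank $C$; a set of (content, SC) assignments is independent iff no SC is assigned more than $C$ contents.

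Next, I would verify that Algorithm~\ref{alg:greedy-femto} is the standard matroid greedy: starting from the empty placement, at each step it selects the pair $(n,j)$ maximizing the marginal gain $f(S \cup \{(n,j)\}) - f(S)$ among all pairs whose inclusion keeps the solution independent in the partition matroid (i.e., the chosen SC has not yet reached capacity $C$, and the content is not already stored there), and adds it to the current solution; this continues until every SC is full. Up to this translation between the algorithm's indexing and the matroid viewpoint, this is precisely the canonical greedy procedure covered by the classical approximation theorem.

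With these two ingredients in place, the bound follows directly by invoking the result of~\cite{krause2012submodular} (originally due to Fisher, Nemhauser, and Wolsey): for maximizing a monotone submodular function subject to a matroid constraint, the greedy algorithm returns a solution $S^*$ satisfying
\[
f(S^*) \;\geq\; \tfrac{1}{2}\, f(OPT),
\]
which is the claimed bound. The only potential pitfall is the bookkeeping check that Algorithm~\ref{alg:greedy-femto} indeed respects the partition-matroid feasibility at every iteration (rather than, say, the simpler uniform matroid that governed Problem~\ref{problem:single-cache}); however, this is immediate from the per-SC capacity bookkeeping in the algorithm, so no genuine technical obstacle remains. The proof is therefore essentially a one-line invocation of a known theorem, with Lemma~\ref{lemma:femto-np-matroid} doing the substantive work.
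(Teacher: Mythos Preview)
Your proposal is correct and follows essentially the same approach as the paper: invoke Lemma~\ref{lemma:femto-np-matroid} to obtain monotonicity, submodularity, and the (partition) matroid structure of the constraint, then apply the classical $\tfrac{1}{2}$-approximation guarantee for greedy over a matroid from~\cite{krause2012submodular}. The paper's justification is in fact the one-paragraph remark immediately preceding the theorem statement, and your write-up simply spells out the same reduction in more detail.
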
 



\begin{algorithm}[t]
\caption{Greedy Algorithm for Optimization Problem~\ref{problem:femto-related}.\\{\textit{computation complexity: $O\left(K^{2}\cdot M^{2}\right)$}}}
\begin{algorithmic}[1]
\Statex {\textbf{Input:} \textit{utility $\{u_{kn}^{i}\}$}, \textit{content demand $\{p_{k}^{i}\}$}, \textit{mobility $\{q_{ij}\}$}, $\forall k,n\in\mathcal{K},~i\in\mathcal{N},~j\in\mathcal{M}$}
\State $A\leftarrow \mathcal{K}\times \mathcal{M};~S_{0}\leftarrow\emptyset;~t\leftarrow 0$
\For {$j\in \mathcal{M}$}
	\State $c_{j}\leftarrow 0$
\EndFor
\While {$A \neq \emptyset$}
	\State $t\leftarrow t+1$
	\State $(n,j) \leftarrow  \underset{(k,\ell) \in A}{argmax} ~~f(S_{t-1} \cup \{(k,\ell)\}) $
	\Statex \Comment where, n: content; j: cache/SC
	\If {$c_{j}+1\leq C$}
		\State $c_{j}\leftarrow c_{j}+1$
		\State $S_{t} \leftarrow  S_{t-1} \cup \{(n,j)\}$
	\Else
		\State $S_{t} \leftarrow  S_{t-1} \cup \{(n,j)\},$
	\EndIf
	\State $A\leftarrow A \backslash \{(n,j)\}$
\EndWhile
\State $S^{*}\leftarrow S_{t}$
\State\Return $S^{*}$
\end{algorithmic}
\label{alg:greedy-femto}
\end{algorithm}

Submodular optimization problems have received considerable attention recently, and a number of sophisticated approximation algorithms have been considered (see, e.g.,~\cite{krause2012submodular} for a survey). For example, a better $\left(1-\frac{1}{e}\right)$-approximation can be found following the ``multilinear extension'' approach~\cite{calinescu-max-submodular-matroid}, based on a continuous relaxation and pipage rounding. A similar approach has also been followed in the original femto-caching paper~\cite{femto}. Other methods also exist that can give an $\left(1-\frac{1}{e}\right)$-approximation~\cite{filmus2014monotone}. Nevertheless, minimizing algorithmic complexity or optimal approximation algorithms are beyond the scope of this paper. Our goal instead is to derive {fast and efficient} algorithms (like greedy) that can handle the large content catalogues and content related graphs $\mathbf{U}$, and compare the performance improvement offered by soft cache hits. The worst-case performance guarantees offered by these algorithms are added value. 
  


\section{Femtocaching with Related Content Delivery}
\label{sec:second-utility-model}
We have so far considered a system corresponding to the example of Fig~\ref{fig:mobile-app-example}, where a cache-aware system \emph{recommends} alternative contents to users (in case of a cache miss), but users might not accept them. In this section, we consider a system closer to our second example of Fig~\ref{fig:mobile-app-example-second-model}, where the system \emph{delivers} some related content that is locally available \emph{instead of the original content}, in case of a cache miss. While a more extreme scenario, we believe this might still have application in a number of scenarios, as explained in Section~\ref{sec:intro} (e.g., for low rate plan users under congestion, or in limited access scenarios~\cite{internet-org,free-basics}). We are therefore interested to model and analyze such systems as well.
Due to space limitation, we present only the more generic femto-cache case of Section~\ref{sec:femto}; the analysis and results for the single cache cases of Section~\ref{sec:single} follow similarly.

Since original requests might not be served, the (soft) cache hit ratio metric does not describe sufficiently the performance of this system. To this end, we modify the definition of content utility:
\begin{definition}\label{def:utility-as-satisfaction}
When a user $i$ requests a content $k$ that is not locally available and the content provider delivers an alternative content $n$ then the user satisfaction is given by the utility $u_{kn}^{i}$. {$u_{kn}^{i} \in R$ is a real number, and does not denote a probability of acceptance, but rather the happiness of user $i$ when she receives $n$ instead of $k$. Furthermore $u_{kk}^{i} = U_{max}, \forall i$.}
\end{definition}

\textit{Note}: we stress that the utilities $u_{kn}^{i}$ in Def.~\ref{def:utility-as-satisfaction} do not represent the probability a user $i$ to accept a content $n$ (as in Def.~\ref{def:utility-as-probability}), but the satisfaction of user $i$ given that she accepted content $n$. {User satisfaction can be estimated by past statistics, or user feedback, e.g., by asking user to rate the received alternative content.}

Let us denote as $G_{i}(t)\subseteq \mathcal{M}$ the set of SCs with which the user $i$ is associated at time $t$. Given Def.~\ref{def:utility-as-satisfaction}, when a user $i$ requests at time $t$ a content $k$ that is not locally available, we assume a system (as in Fig.~\ref{fig:mobile-app-example-second-model}) that delivers to the user the cached content with the \textit{highest} utility\footnote{Equivalently, the system can recommend all the stored contents to the user and then allow the user to select the content that satisfies her more.}, i.e., the content $n$ where
\begin{equation}
n \equiv{\arg\max}_{\ell\in\mathcal{K}, j\in G_{i}(t)} ~\left\{u_{k\ell}^{i}\cdot x_{\ell j}\right\}
\end{equation}
Hence, the satisfaction of a user $i$ upon a request for content $k$ is
\begin{equation}
\max_{n\in\mathcal{K}, j\in G_{i}(t)}\left\{u_{kn}^{i} \cdot x_{nj}\right\}
\end{equation}
Using the above expression and proceeding similarly to Section~\ref{sec:femto}, it easily follows that the optimization problem that the network needs to solve to optimize the total user satisfaction in the system (among all users and all content requests), which we call \textit{soft cache hit user satisfaction} (SCH-US), is:
\begin{problem} \label{problem:femto-second-utility-model}
The optimal cache placement problem for the femtocaching scenario with alternative content delivery and content relations described by the matrix $\mathbf{U} = \{{u_{kn}^{i}}\}$, is
\begin{eqnarray}
\underset{X = \{x_{1}, ..., x_{K}\}}{\mbox{maximize }} \;  {f(X) = {\sum_{i=1}^{N}}\sum_{k=1}^{K} p_{k}^{i} \cdot \max_{n\in\mathcal{K}, j\in\mathcal{M}} \left(u_{kn}^{i} \cdot x_{nj}\cdot q_{ij}\right)}, \label{eq:objective-femto-second-model}\\
 \sum_{k =1}^{K} x_{kj} \leq C, ~~~\forall j \in \mathcal{M}.\label{eq:constraint-femto-second-model}
\end{eqnarray}
\end{problem}

For the sub-cases 1 and 2 of Def.~\ref{def:utility-as-probability} presented in Section~\ref{sec:model-sch}, the following corollary holds. 
\begin{corollary}\label{thm:corollary-femto-second-model-subcases}
The expression of \eq{eq:objective-femto-second-model} needs to be modified as
\begin{align}
\max_{n\in\mathcal{K}, j\in\mathcal{M}} \left(u_{kn}^{i} \cdot x_{nj} \cdot q_{ij}\right) 
	\rightarrow &q_{ij}\cdot E\left[\max_{n\in S}\{u_{kn}^{i}\}\right] =\nonumber\\
				&~~~= q_{ij}\cdot \int\left(1-\prod_{n\in S}F_{kn}(x)\right)dx \\
u_{kn}^{i} \rightarrow &u_{kn}
\end{align}
(where $S=\{(\ell,m):\ell\in\mathcal{K}, m\in\mathcal{M},x_{\ell m}=1\}$) for the sub-cases 1 and 2 of Def.~\ref{def:utility-as-probability}, respectively. 
\end{corollary}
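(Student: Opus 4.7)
The plan is to handle the two sub-cases separately, as they rely on different reasoning. In both cases, the starting point is the objective of Problem~\ref{problem:femto-second-utility-model}, which depends on the per-user, per-pair utilities $u_{kn}^{i}$; when these are not directly observable, the natural optimization target becomes the expected objective over the utility distribution, and the substitutions in the corollary are derived so that this expectation is computable from the aggregate information actually available.

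For sub-case~1, I would first push the expectation through the deterministic outer sums and multiplicative factors $p_{k}^{i}$ and $q_{ij}$ by linearity, reducing the task to computing $E\!\left[\max_{n\in\mathcal{K},\,j\in\mathcal{M}}\{u_{kn}^{i}\cdot x_{nj}\cdot q_{ij}\}\right]$. Since $x_{nj}$ and $q_{ij}$ are deterministic in the optimization (and $x_{nj}\in\{0,1\}$), the inner maximum effectively ranges only over the pairs in $S=\{(\ell,m):x_{\ell m}=1\}$ visible to user $i$, so the quantity to evaluate is $E[\max_{n\in S}\{u_{kn}^{i}\}]$. Assuming the utilities $\{u_{kn}^{i}\}_{n}$ are mutually independent for a fixed $(i,k)$, I would apply the tail integral identity $E[Y]=\int_{0}^{U_{\max}}\!\Pr(Y>x)\,dx$ for non-negative bounded $Y$, together with the elementary fact $\Pr\!\left(\max_{n\in S} u_{kn}^{i}\le x\right)=\prod_{n\in S}F_{kn}(x)$, to obtain the closed form $\int(1-\prod_{n\in S}F_{kn}(x))\,dx$ claimed in the corollary. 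The $q_{ij}$ factor is retained on the right-hand side because it is deterministic and independent of the utility randomness.

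For sub-case~2, the situation is simpler: only the population-averaged utilities $u_{kn}$ are known, so there is no distributional object over which to take an expectation. The substitution $u_{kn}^{i}\to u_{kn}$ is then a plug-in replacement that turns Problem~\ref{problem:femto-second-utility-model} into a fully deterministic program with the same structural form, so the algorithms of Section~\ref{sec:femto} apply verbatim. One should note that, because the maximum does not commute with expectation, this plug-in is an approximation (by Jensen's inequality, optimizing the plug-in objective underestimates $E[\text{SCH-US}]$); I would mention this as a modelling remark rather than a derivation.

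The main obstacle is the independence assumption invoked in sub-case~1: without it, $\Pr(\max_{n} u_{kn}^{i}\le x)$ is not the product $\prod_{n}F_{kn}(x)$ but involves the joint CDF evaluated on the diagonal, and the clean closed form is lost. I would therefore state independence explicitly (consistent with the marginal-only information assumed in sub-case~1) and briefly note how the general identity $E[\max_{n\in S} u_{kn}^{i}]=\int(1-F_{S,k}^{i}(x,\ldots,x))\,dx$ could replace the product if correlated utility models were desired.
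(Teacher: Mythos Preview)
Your proposal is correct and follows essentially the same route as the paper: take expectations inside the outer sums by linearity, factor out the deterministic $q_{ij}$, reduce to $E[\max_{n\in S}u_{kn}^{i}]$, and evaluate this via the tail formula for a nonnegative random variable together with the independence assumption to get $\int(1-\prod_{n\in S}F_{kn}(x))\,dx$; sub-case~2 is treated as the direct plug-in. Your additional remarks on the Jensen gap in sub-case~2 and on the joint-CDF generalization when independence fails go beyond what the paper states, but they are sound observations rather than deviations in method.
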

\begin{proof}
The proof is given in Appendix~\ref{app:corollary-subcases-second-model}.
\end{proof}


We now prove the following Lemma, which shows that Theorem~\ref{thm:greedy} and Algorithm~\ref{alg:greedy-single-cache} apply also to the Optimization Problem~\ref{problem:femto-second-utility-model}.

\begin{lemma}\label{lemma:np-and-submodular-second-model}~\\
(1) The Optimization Problem~\ref{problem:femto-second-utility-model} is NP-hard, \\
(2) with submodular and monotone objective function (\eq{eq:objective-femto-second-model}).
\end{lemma}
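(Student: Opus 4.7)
My plan is to prove the two parts separately, using essentially the same template as Lemma~\ref{lemma:femto-np-matroid} but adapted to the $\max$-based objective of Problem~\ref{problem:femto-second-utility-model}.

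For NP-hardness, I would reduce from the classical femto-caching placement problem~\cite{femto}, already known to be NP-hard. Take the restriction of Problem~\ref{problem:femto-second-utility-model} with $u_{kk}^{i}=1$ and $u_{kn}^{i}=0$ for all $n\neq k$ (consistent with Definition~\ref{def:utility-as-satisfaction} when $U_{max}=1$). Under this restriction each per-request term collapses to $\max_{(n,j)}(u_{kn}^{i}\cdot x_{nj}\cdot q_{ij}) = \max_{j}(x_{kj}\cdot q_{ij})$, which is exactly the hit indicator used in~\cite{femto}; the capacity constraints are identical. Hence any polynomial-time solver for Problem~\ref{problem:femto-second-utility-model} would solve femto-caching, yielding the reduction.

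For (2), I would view a placement $X$ as a subset of the ground set $E=\mathcal{K}\times\mathcal{M}$, namely $S(X)=\{(n,j)\in E:\, x_{nj}=1\}$, and rewrite the objective as
\[
f(X) = \sum_{i=1}^{N}\sum_{k=1}^{K} p_{k}^{i}\cdot g_{i,k}(S(X)), \quad g_{i,k}(S) \equiv \max_{(n,j)\in S}\{w^{(i,k)}_{(n,j)}\},
\]
with non-negative weights $w^{(i,k)}_{(n,j)} = u_{kn}^{i}\cdot q_{ij}\geq 0$ and the convention $g_{i,k}(\emptyset)=0$. Monotonicity of each $g_{i,k}$ is immediate since enlarging $S$ cannot decrease the maximum. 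For submodularity I use the standard argument: for $A\subseteq B$ and $e\notin B$, the marginal gain satisfies
\[
g_{i,k}(A\cup\{e\})-g_{i,k}(A) = \max\{0,\, w^{(i,k)}_{e}-g_{i,k}(A)\},
\]
and since $g_{i,k}(A)\leq g_{i,k}(B)$ by monotonicity, this marginal is at least $\max\{0,\, w^{(i,k)}_{e}-g_{i,k}(B)\}$, the corresponding marginal at $B$. Finally, $f$ is a non-negative linear combination (weights $p_{k}^{i}\geq 0$) of monotone submodular functions, so $f$ is itself monotone and submodular.

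I do not foresee any serious obstacle: the NP-hardness reduction is a direct specialization, and the submodularity of a maximum of non-negative weights is a textbook result. The only care needed is to confirm that the restriction used in the reduction is admissible under Definition~\ref{def:utility-as-satisfaction} (it is, by choosing $U_{max}=1$), and to note that the per-cache capacity constraints constitute a separate (partition matroid) condition on $2^{E}$ that plays no role in the submodularity argument itself and would only be invoked later when applying a greedy approximation guarantee.
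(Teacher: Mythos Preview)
Your proposal is correct and essentially mirrors the paper's own proof: your marginal expression $\max\{0,\, w_{e}^{(i,k)} - g_{i,k}(A)\}$ is precisely the ramp function $R(\cdot)$ the paper uses, and the diminishing-returns comparison via $g_{i,k}(A)\le g_{i,k}(B)$ is identical. The only cosmetic difference is the NP-hardness anchor---the paper points to Problem~\ref{problem:femto-related} (``same nature''), whereas you specialize directly to the femto-caching instance of~\cite{femto}---but both are immediate restrictions and yield the same conclusion.
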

\begin{proof} The proof is given in Appendix~\ref{app:lemma-second-model}.
\end{proof}

\section{Performance Evaluation}
\label{sec:sims}

\subsection{Simulation setup}
\textbf{Contents dataset}. We consider a real dataset of YouTube videos from~\cite{youtube-related-videos-dataset}. The dataset contains several information about the videos, such as their (a) popularity and (b) size, as well as (c) the list of related videos (as recommended by YouTube) for each of them. This information allows us to simulate scenarios with real parameters $p_{k}^{i}$, $s_{k}$, and $u_{kn}^{i}$. After pre-processing the data to remove contents with no popularity values, we build the related content matrix (utility matrix $\textbf{U}$).  Due to the sparsity of the dataset, we only consider contents belonging to the largest connected component, that includes $K=2098$ videos. The average number of related content for these videos is $3.6$. Since our dataset does not contain per-user information, we consider the sub-case-2 of Definition~\ref{def:utility-as-probability}, and assume that if content $k$ is related to content $n$ in the dataset, then $u_{kn} = 1$. However, we later perform a sensitivity analysis as a function of diminishing acceptance probabilities for related content.

\textbf{Cellular network}. We consider an area of 1 km$^2$ that contains $M=20$ SCs. SCs are randomly placed (i.e., uniformly) in the area, which is an assumption that has been also used in similar works~\cite{femto,poularakis2014}. An SC can serve a request from a user, when the user is inside its communication range, which we set to $200$ meters. We also consider $N=50$ mobile users. This creates a relatively dense network, where a random user is connected to $3$ SCs \emph{on average}. We will also consider sparser and denser scenarios, for comparison. We generate a set of 20~000 requests according to the content popularity calculated from the UouTube dataset, over which we average our results.


Unless otherwise stated, the simulations use the parameters summarized in Table~\ref{tab:parameter}.

\begin{table}[h]
\centering
\caption{Parameters used in the simulations.}
\begin{small}
\begin{tabular}{|c|c|c|c|}
\hline
\textbf{Parameter} & \textbf{Value} & \textbf{Parameter} & \textbf{Value} \\ \hline
nb. of contents, $K$	& 2098	& nb. of requests 	& 20.000 \\ \hline
Cache size, $C$			& 5		& nb. of SCs,	$M$		& 20  \\  \hline
Area					& 1	x 1 km	& Communication range	& 200 m	  \\ \hline
\end{tabular}
\end{small}
\label{tab:parameter}
\end{table}
%

\subsection{Performance Results}

We consider the following four content caching schemes:
\begin{itemize}
\item \emph{Single (popularity-based)}: Single cache accessible per user (e.g., the closest one). Only normal cache hits allowed, and the most popular contents are stored in each cache. This is the baseline scheme, commonly used in related works.
\item \emph{SingleSCH}: Single cache with soft cache hits, with the content allocation given by Algorithm~\ref{alg:greedy-single-cache} (or Algorithm~\ref{alg:fast-single-knapsack}).
\item \emph{Femto}: Femto-caching without soft cache hits {(from~\cite{femto})}.
\item \emph{FemtoSCH}: Femto-caching with soft cache hits, with the content allocation given by Algorithm~\ref{alg:greedy-femto}.
\end{itemize}

\textbf{Cache size impact:} We first investigate the impact of cache size, assuming fixed content sizes. Fig.~\ref{fig:sens_cache_size} depicts the total cache hit ratio, for different cache sizes $C$: we consider a cache size per SC between 2 and 15 contents. The simulations suggest that soft cache hits (SCH) can double the cache hit ratio. What is more, these gains are applicable to both the single cache and femto-caching scenarios, which show that our approach can offer considerable benefits \emph{on top of femto-caching}, which as we see can already achieve an improvement of more than $50\%$, compared to single caches in this scenario. The two methods together offer a total of $3\times$ improvement compared to the baseline scenario ``Single'', reaching a maximum cache hit ratio of about $60\%$ for $C=15$. Finally, even with a cache size of per SC of about $0.1\%$ of the total catalog, introducing soft cache hits offers $30\%$ cache hit ratio, which is promising.

\begin{figure}\centering
		\includegraphics[width=0.8\columnwidth]{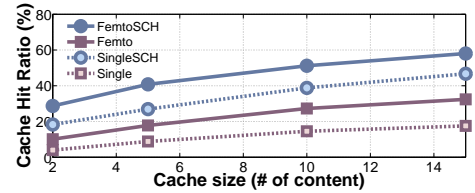}
		\caption{Cache hit ratio vs. $C$, with fixed content size.}
		\label{fig:sens_cache_size}
\end{figure}

\textbf{Variable file size:} In Fig.~\ref{fig:sens_cache_size_alg2}, we now also take into account the different file sizes when optimizing our allocation (these are available in the YouTube dataset).  Comparing the performance of even this less theoretically efficient greedy algorithm (Algorithm~\ref{alg:fast-single-knapsack}) to the single cache with no soft hits, already reveals considerable performance improvement. In fact, Algorithm~\ref{alg:fast-single-knapsack} exploits the fact that contents have different sizes, to ``replace'' longer contents with related ones that might be shorter. While one could of course not replace a very large content with a very small one, we have observed in our dataset that the sizes of related contents are not independent (i.e., related videos of large videos are indeed large, and vice versa, but have enough difference that can sometimes be exploited by the size-aware algorithm). 


\begin{figure}[h]
		\includegraphics[width=0.8\columnwidth]{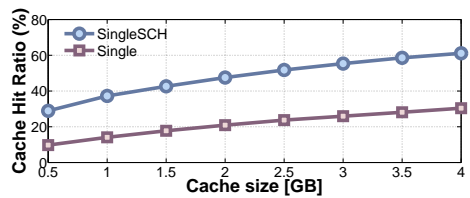}
		\caption{Cache hit ratio vs. $C$ with variable content size.}
		\label{fig:sens_cache_size_alg2}
\end{figure}

\textbf{SC density impact:} In Fig.~\ref{fig:sens_small_cells} we perform a sensitivity analysis with respect to the number of SCs in the area (assuming fixed capacity $C=5$). We test 2 sparse scenarios ($M=5$ and $M=10$) and 2 dense scenarios ($M=20$ and $M=30$). The average number of SCs that can be seen by a user varies from around $1$ ($M=5$) to $4.6$ ($M=30$). In the sparse scenarios, a user can usually see at most one SC. For this reason, ``Femto'' and ``Single'' perform similarly ($20-30$\% cache hit rate). As the SC density increases, the basic femto-caching is able to improve performance, as expected. However, femtocaching with SCH brings even more performance gains. With a storage capacity per SC of about $0.25\%$ of the content catalog ($5$ out of $2000$ contents), and a coverage overlap of 2-4 SCs per user, femto-caching together with SCH can achieve a $30-50\%$ cache hit ratio. This is promising on the additive gains of the two methodologies.

    \begin{figure}[h]
		\includegraphics[width=0.8\columnwidth]{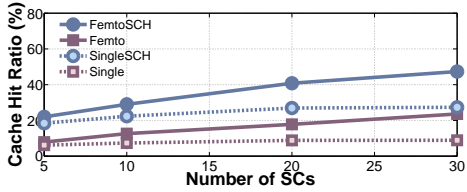}
		\caption{Cache hit ratio vs. number of SCs $M$ ($C=5$).}
		\label{fig:sens_small_cells}
    \end{figure}

\textbf{Utility matrix impact:} In these final two sets of simulations, we further investigate the impact of the content relations as captured by the matrix $\mathbf{U}$ (and its structure). A first important parameter to consider is the average number of related contents per video, which we denote as $E[R]$, where the set $R$ for a content $k$ is defined as $R = \{n\in\mathcal{K}: u_{kn}>0\}$. In the previous scenarios, the number of related contents was inferred from the YouTube trace, and was found to be equal to $E[R] = 3.6$. To understand the impact of this parameter, in this next scenario we generate two synthetic content graphs $\mathbf{U}$:
\begin{itemize} 
\item  \emph{SCH(1)}: content $k$ picks content $n$ as related content with probability $p_n$ (i.e., proportional to its popularity), normalized to a mean $E[R]$ value per content.
\item \emph{SCH(2)}: any content picks $E[R]$ related content randomly chosen.
\end{itemize}
While the latter assumes that content relations are independent of their popularity, the former assumes that a more popular content has a higher chance to appear in the related content list. In fact, this is quite inline with daily experience of how recommendation systems work.

In Fig.~\ref{fig:sens_utility}, we compare the cache hit ratio for single and femto-caching scenarios: without SCH, with SCH1, and with SCH2, assuming that $\mathbf{E}[R]$ varies between $2$ and $10$ related content items. A first observation is that, due to the sparsity of the content matrix, SCH(2) (i.e., random content relations) brings only marginal improvements to the total number of hits. On the other hand, a correlation between related content and popularity (i.e., SCH(1)) is what brings considerable offloading gains, even for small $E[R]$. In fact, comparing these synthetic results with the previous trace-based ones, one can infer that the real dataset probably more closely resembles SCH1, i.e. does exhibit such a correlation.

\begin{figure}[h]
\includegraphics[width=0.8\columnwidth]{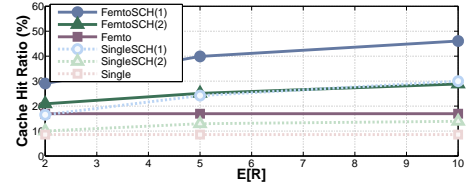}
\caption{Cache hit ratio for different number of related contents $E[R]$; synthetic traces.}
\label{fig:sens_utility}
\end{figure}

\textbf{User flexibility:} In this last scenario, we present in Fig.\ref{fig:ukn} the cache hit ratio as a function of the willingness of a user to accept a related content. We consider two scenarios, both in the femto-caching context:

\begin{itemize}
\item \emph{Synthetic:} We generate a synthetic matrix $U$ as in SCH(1) above, with $E[R] = 4$ and $u_{kn} = u < 1$.
\item \emph{YouTube:} We use the real YouTube dataset for the matrix $U$. However, all related contents also have a utility $u_{kn} = u < 1$ (instead of $u_{kn} = 1$ considered in the previous scenarios).
\end{itemize}
On the x-axis of Fig.\ref{fig:ukn} we vary this parameter $u$ from $0$ to $1$. As can be seen there, when the user's acceptance probability becomes very small, the scenario becomes equivalent to standard femto-caching without soft hits, and the gains reported there are inline with the previous plots. However, as user willingness to accept related content increases, the optimization policy can exploit opportunities for potential soft cache hits and improve performance. E.g. for a probability $50\%$ to accept an alternative recommended content, cache hit ratios increase by almost $2 \times$ (from $15\%$ to $27\%$ in the YouTube dataset). Results are in fact very comparable for the synthetic and YouTube traces. Finally, we observed similar behavior in scenarios conforming to the model of Section~\ref{sec:second-utility-model}, where $u_{kn}$ do not denote probabilities of acceptance, but correspond to the user satisfaction.

    \begin{figure}[h]
		\includegraphics[width=0.8\columnwidth]{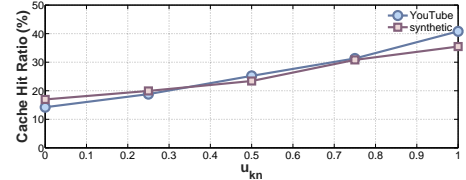}
		\caption{Cache hit ratio as a function of user's willingness to accept related content ($M = 20, C = 5, E[R] = 4$).}
		\label{fig:ukn}
    \end{figure}

%

\section{Related Work}
\label{sec:related}
\textbf{Mobile Edge Caching.} Densification of cellular networks, overlaying the standard macro-cell network with a large number of small cells (e.g., pico- or femto-cells), has been extensively studied and is considered a promising solution to cope with data demand~\cite{hetnets-commag-2012,femtocells-survey-jsac-2012,hetnets-paradigm-shift}. As this densification puts a tremendous pressure on the backhaul network, researchers have suggested storing popular content at the ``edge'', e.g., at small cells~\cite{femto}, user devices~\cite{femtoD2D,Hui-offloading,sermpezis2014}, or vehicles acting as mobile relays~\cite{vigneri2016}. 

Our work is complementary to these approaches, as it can utilize such mobile edge caching systems while showing how to further optimize the cache allocation when there is a cache-aware recommender systems in place. We have applied this approach in the context of mobile (ad-hoc) networks with delayed content delivery~\cite{sch-chants-2016} as well, and applied it here for the first time in the context of femto-caching~\cite{femto}. Additional research directions have also recently emerged, more closely considering the interplay between caching and the physical layer such as Coded Caching~\cite{Ali2014} and caching for coordinated (CoMP) transmission~\cite{Psounis2015, Lau:PhyCache}. We believe the idea of soft cache hits could be applied in these settings as well, and we plan to explore this as future work.

\textbf{Caching and Recommendation Interplay.} Although not in the area of wireless systems, there exist some recent works that have jointly considered caching and recommendation in peer-to-peer networks~\cite{content-recommendation-swarming} and CDNs~\cite{what-should-you-cache-nossdav,cache-centric-video-recommendation}. Specifically, ~\cite{content-recommendation-swarming} studies the interplay between a recommendation system and the performance of content distribution on a peer-to-peer network like BitTorrent. The authors model and analyse the performance, and propose heuristics (e.g., based on the number of cached copies, or ``seeders'' in BitTorrent) for tuning the recommendation system, in order to improve performance and reduce content distribution costs. Although in our work we follow the opposite approach (given a recommendation system and user utilities, we optimize the caching algorithm), it would be interesting to investigate such recommendation system optimizations in the context of cellular networks.

~\cite{what-should-you-cache-nossdav} shows that users tend to follow YouTube's suggestions, and despite the large catalog of YouTube, the top-10 recommendations are usually common for different users in the same geographical region. Hence, CDNs can use the knowledge from the recommendation system to improve their content delivery (e.g., by storing in central caches the most appropriate contents). Finally, in~\cite{cache-centric-video-recommendation} the authors suggest a recommended list reordering approach that can achieve higher cache hit ratios (for YouTube servers/caches). In particular, they show that performance can be improved when the positions of contents in the related list of YouTube are not changed compared to the previous recommended list. The increasing dependence of user requests on the output of recommender systems clearly suggests that there is an opportunity to further improve the performance of (mobile) edge caching by jointly optimizing both, with minimum impact on user QoE.


%
%
%
%
%


\section{Conclusions}
\label{sec:conclusions}
In this paper, we have proposed the idea of \emph{soft cache hits}, where an alternative content can be recommended to a user, when the one she requested is not available in the local cache. While normal caching systems would declare a cache miss in that case, we argue that an appropriate recommended content, related to the original one can still satisfy the user with high enough probability. We then used this idea to design such a system around femto-caching, and demonstrated that considerable additional gains, \emph{on top of those of femto-caching} can be achieved using realistic scenarios and data. We believe this concept of soft cache hits has wider applicability in various caching systems.

\bibliographystyle{ieeetr}

\appendices
\section{Proof of Corollary~\lowercase{\ref{thm:corollary-single-cache-subcases}}}\label{app:corollary-single-subcases}
\textbf{Sub-case 1.} Since the exact per-user utilities are not known, we calculate the SCHR given in \eq{eq:schr-single-defintion} by taking the conditional expectations on $F_{kn}(x)$. Denoting the corresponding pdf as $f_{kn}(x)$, we proceed as follows:

\begin{footnotesize}
\begin{align*}
SCHR 
&= \sum_{i=1}^{N}\sum_{k=1}^{K} p_{k}^{i}\cdot q_{i} \cdot E\left[ \left(1 - \prod_{n=1}^{K} \left(1- u_{kn}^{i} \cdot x_{n}\right)\right)\right] \\
&= \sum_{i=1}^{N}\sum_{k=1}^{K} p_{k}^{i}\cdot q_{i} \cdot \left(1 -  E\left[\prod_{n=1}^{K} \left(1- u_{kn}^{i} \cdot x_{n}\right)\right] \right)\\
&= \sum_{i=1}^{N}\sum_{k=1}^{K} p_{k}^{i}\cdot q_{i} \cdot \left(1 -  \prod_{n=1}^{K} E\left[\left(1- u_{kn}^{i} \cdot x_{n}\right]\right) \right)\\
&= \sum_{i=1}^{N}\sum_{k=1}^{K} p_{k}^{i}\cdot q_{i} \cdot \left(1 - \prod_{n=1}^{K} \int \left(1- t \cdot x_{n}\right)\cdot f_{kn}(t)dt \right)\\
&= \sum_{i=1}^{N}\sum_{k=1}^{K} p_{k}^{i}\cdot q_{i} \cdot \left(1 - \prod_{n=1}^{K}  \left(1- \left(\int t\cdot f_{kn}(t) dt\right) \cdot x_{n}\right) \right)\\
&= \sum_{i=1}^{N}\sum_{k=1}^{K} p_{k}^{i}\cdot q_{i} \cdot \left(1 - \prod_{n=1}^{K}  \left(1- E[u_{kn}^{i}] \cdot x_{n}\right) \right)
\end{align*}
\end{footnotesize}
where (i) the third equation holds since the utilities for different content pairs \{k,n\} are independent, and thus the expectation of their product is equal to the product of their expectations, and (ii) we denoted 
\begin{align*}
E[u_{kn}^{i}]&\equiv\int t\cdot f_{kn}(t) dt=\int (1-F_{kn}(t)) dt
\end{align*}
and the above equation holds since $u_{kn}^{i}$ is a positive random variable.

\textbf{Sub-case 2} follows straightforwardly.

\section{Proof of Lemma~\lowercase{\ref{lemma:single-hardness}}}\label{app:single-hardness}
We prove here the NP-hardness of the optimal cache allocation for a single cache with soft cache hits. Let us consider an instance of Optimization Problem~\ref{problem:single-cache}, where the utilities are equal among all users and can be either 1 or 0, i.e., $u_{kn}^{i} = u_{kn},~\forall i\in\mathcal{N}$ and $u_{kn}\in\{0,1\},~\forall k,n\in\mathcal{K}$. We denote as $\mathcal{R}_{k}$ the set of contents related to content $k$, i.e.
\begin{equation}\label{eq:related-set}
 \mathcal{R}_{k} =\{n \in \mathcal{K}: n \ne k, u_{kn} > 0\} \;\; \mbox{(related content set)}
\end{equation}



Consider the content subsets $\mathcal{S}_{k} = \{k\} \cup \mathcal{R}_{k}$. Assume that only content $k$ is stored in the cache ($x_{k}=1$ and $x_{n} = 0, \forall n \ne k$). All requests for contents in $S_{k}$ will be satisfied (i.e. ``covered'' by content $k$), and thus SCHR will be equal to $\sum_{i\in\mathcal{N}}\sum_{n \in S_{k}}p_{n}^{i}\cdot q_{i}$. When more than one contents are stored in the cache, let $\mathcal{S^{'}}$ denote the union of all contents covered by the stored ones, i.e., $\mathcal{S^{'}} = \bigcup_{ \{k: x_{k} = 1\} } S_{k}$. Then, the SCHR will be equal to $\sum_{i\in\mathcal{N}}\sum_{n \in S^{'}}p_{n}^{i}\cdot q_{i}$. Hence, the Optimization Problem~\ref{problem:single-cache} becomes equivalent to
\begin{align*}
\max_{\mathcal{S^{`}}} ~~\sum_{n \in \mathcal{S^{'}}}p_{n}^{i}\cdot q_{i} ~~~~~~~~~~s.t.~~\left| \{k: x_{k} = 1\} \right| \leq C.
\end{align*} 
This corresponds to the the \textit{maximum coverage problem with weighted elements}, where ``elements'' (to be ``covered'') correspond to the contents $i\in\mathcal{K}$, weights correspond to the probability values $p_{n}^{i}\cdot q_{i}$, the number of selected subsets $\{k: x_{k} = 1\}$ must be less than $C$, and their union of covered elements is $\mathcal{S^{'}}$. This problem is known to be a NP-hard problem~\cite{budget-max-cover}, and thus the more generic problem (with different $u_{kn}^{i}$ and  $0\leq u_{kn} \le 1$) is also NP-hard.  
%

\section{Proof of Lemma~\lowercase{\ref{lemma:single-submodular}}}\label{app:single-submodular}
The objective function of \eq{eq:objective-single-cache} $f(X):\{0,1\}^{K}\rightarrow \mathbb{R}$ is equivalent to a set function $f(S):2^{\mathcal{K}}\rightarrow \mathbb{R}$, where $\mathcal{K}$ is the finite \emph{ground set} of contents, and $S = \{k \in \mathcal{K}: x_{k}=1\}$. In other words, 
\begin{equation}
f(S)\equiv \sum_{i=1}^{N}\sum_{k=1}^{K} p_{k}^{i}\cdot q_{i} \cdot \left(1-  \prod_{n\in S}\left(1-u_{kn}^{i}\right) \right).
\end{equation}

A set function is characterised as \textit{submodular} if and only if for every $A\subseteq B\subset V$ and $\ell\in V\backslash B$ it holds that
\begin{equation}
\left[f\left(A\cup \{\ell\} \right) - f\left(A \right)\right] - \left[f\left(B\cup \{\ell\} \right) - f\left(B \right)\right] \geq 0
\end{equation}

From \eq{eq:objective-single-cache}, we first calculate 
\begin{footnotesize}
\begin{align*}
& f\left(A\cup \{\ell\} \right) - f\left(A \right) =\\
& \hspace{-0.1cm}=	\sum_{i=1}^{N}\sum_{k=1}^{K} p_{k}^{i} q_{i} \left(1 - \hspace{-0.2cm}\prod_{n \in A\cup \{\ell\}} \hspace{-0.2cm}\left(1- u_{kn}^{i} \right)\right) \\
&\hspace{1.5cm}- \sum_{i=1}^{N}\sum_{k=1}^{K} p_{k}^{i} q_{i}  \left(1 - \prod_{n \in A} \left(1-  u_{kn}^{i} \right)\right) \\
& \hspace{-0.1cm}=	\sum_{i=1}^{N}\sum_{k=1}^{K} p_{k}^{i}\cdot q_{i} \cdot \left( \prod_{n \in A} \left(1- u_{kn}^{i} \right) - \prod_{n \in A\cup \{\ell\}} \left(1- u_{kn}^{i} \right)\right) \\ 
& \hspace{-0.1cm}=	\sum_{i=1}^{N}\sum_{k=1}^{K} p_{k}^{i}\cdot q_{i} \cdot \left(   u_{k\ell}^{i}  \cdot  \prod_{n \in A} \left(1- u_{kn}^{i} \right) \right).
\end{align*}
\end{footnotesize}
%
Then,
\begin{footnotesize}
\begin{align*}
&\left[f\left(A\cup \{\ell\} \right) - f\left(A \right)\right] - \left[f\left(B\cup \{\ell\} \right) - f\left(B \right)\right] =\\
&={\sum_{i=1}^{N}}\sum_{k=1}^{K} {p_{k}^{i} q_{i}}  \left(  {u_{k\ell}^{i}}   \prod_{n \in A} \left(1- {u_{kn}^{i}} \right) \right)  \\
&\hspace{1.5cm}
- {\sum_{i=1}^{N}}\sum_{k=1}^{K} {p_{k}^{i} q_{i}}  \left( {u_{k\ell}^{i}}   \prod_{n \in B} \left(1- {u_{kn}^{i}} \right) \right) \\
&={\sum_{i=1}^{N}}\sum_{k=1}^{K} {p_{k}^{i} q_{i}} \cdot {u_{k\ell}^{i}} \cdot  \left(\prod_{n \in A} \left(1- {u_{kn}^{i}} \right) - \prod_{n \in B} \left(1- {u_{kn}^{i}} \right) \right)\\
&={\sum_{i=1}^{N}}\sum_{k=1}^{K} {p_{k}^{i} q_{i}} \cdot {u_{k\ell}^{i}} \cdot  \prod_{n \in A} \left(1-  {u_{kn}^{i}} \right) \cdot \left(1 - \prod_{n \in B\backslash A} \left(1- {u_{kn}^{i}} \right) \right)
\end{align*}
\end{footnotesize}
 

The above expression is always $\geq 0$, which proves the submodularity for function $f$.

Furthermore, the function $f$ is characterised as \textit{monotone} if and only if $f(B)\geq f(A)$ for every $A\subseteq B\subset V$. In our case, this property is shown as

\begin{footnotesize}
\begin{align*}
&f(B)-f(A)  =\\
&\hspace{-0.1cm}= {\sum_{i=1}^{N}}\sum_{k=1}^{K} {p_{k}^{i} q_{i}} \cdot \left(1 - \prod_{n \in B} \left(1- {u_{kn}^{i}} \right)\right)  \\
&\hspace{1.5cm}- {\sum_{i=1}^{N}}\sum_{k=1}^{K} {p_{k}^{i} q_{i}} \cdot \left(1 - \prod_{n \in A} \left(1- {u_{kn}^{i}} \right)\right) \\
&\hspace{-0.1cm}= {\sum_{i=1}^{N}}\sum_{k=1}^{K} {p_{k}^{i} q_{i}} \cdot \left(\prod_{n \in A} \left(1- {u_{kn}^{i}} \right) - \prod_{n \in B} \left(1- {u_{kn}^{i}} \right)\right) \\
&\hspace{-0.1cm}= {\sum_{i=1}^{N}}\sum_{k=1}^{K} {p_{k}^{i} q_{i}} \cdot \prod_{n \in A} \left(1- {u_{kn}^{i}} \right) \cdot \left(1 - \prod_{n \in B\backslash A} \left(1- {u_{kn}^{i}} \right)\right) \geq 0
\end{align*}
\end{footnotesize}

\section{Proof of Theorem~\lowercase{\ref{lemma:single-knapsack}}}\label{app:single-knapsack}
Following similar arguments as in the proof of Lemma~\ref{lemma:single-hardness}, the Optimization Problem~\ref{problem:single-knapsack} can be shown to be equivalent to the \textit{budgeted maximum coverage problem with weighted elements}, which is an NP-hard problem~\cite{budget-max-cover}. 

In Algorithm~\ref{alg:fast-single-knapsack}, we first calculate a solution $S^{(1)}$ returned by a modified version (\textproc{ModifiedGreedy}) of the greedy algorithm (line~1). The differences between the greedy algorithm (e.g., Algorithm~\ref{alg:greedy-single-cache}) and \textproc{ModifiedGreedy}, are that the latter: (a) each time selects to add in the cache the content that increases the most the fraction of the objective function over its own size (line~13), and (b) considers every content, until there is no content that can fit in the cache (lines~14-20). Then, Algorithm~\ref{alg:fast-single-knapsack} calculates the solution $S^{(2)}$ that the greedy algorithm would return if all contents were of equal size (line~2). The returned solution, is the one between $S^{(1)}$ and $S^{(2)}$ that achieves a higher value of the objective function (lines~3-7). 

Hence, Algorithm~\ref{alg:fast-single-knapsack} is a ``fast-greedy'' type of approximation algorithm. Since, the objective function was shown to be submodular and monotone in Lemma~\ref{lemma:single-submodular}, our fast greedy approximation algorithm can achieve a $\frac{1}{2}\cdot \left(1-\frac{1}{e}\right)$-approximation solution (in the worst case), when there is a Knapsack constraint, using similar arguments as in~\cite{leskovec-max-submodular-knapsack}. 

\section{Proof of Lemma~\lowercase{\ref{lemma:femto-np-matroid}}}\label{app:femto-NP-matroid}
\underline{Item (1):} In the single cache case, we reduced the Optimization Problem~\ref{problem:single-cache} to a weighted maximum coverage problem, where a set of contents need to be selected (i.e., to be stored) in order to maximize the weights (i.e., probabilities $p_{k}^{i}\cdot q_{i}$) of the ``elements'' (i.e., other contents) with which they are connected (i.e., edges/utilities $u_{kn}^{i}>0$). The ground set of contents was $\mathcal{K}$, and the ground set of ``elements'' was $\mathcal{K}$ as well.

In the case of multiple users and overlapping caches, the Optimization Problem~\ref{problem:femto-related} can similarly be reduced to the NP-hard weighted maximum coverage problem, if (i) instead of the set of contents $\mathcal{K}$, we consider the set of tuples \{content,user\} $\mathcal{K}\times \mathcal{N}$, and (ii) instead of ``elements''/contents, we consider the set of ``elements''/tuples \{content,cache\} $\mathcal{K}\times\mathcal{M}$.

\noindent\underline{Item (2):} We consider the objective function of \eq{eq:objective-femto-related} and apply the same steps as in proof of Lemma~\ref{lemma:single-submodular}. Specifically, for all sets $A\subseteq B\subset \mathcal{K}\times\mathcal{M}$ and \{content, SC\} tuples $(\ell,m)\in V\backslash B$, we get
\begin{footnotesize}
\begin{align*}
 f&\left(A\cup \{(\ell,m)\} \right) - f\left(A \right) =\\
 =&	\sum_{i=1}^{N}\sum_{k=1}^{K} p_{k}^{i} \left(1 - \prod_{(n,j)\in A\cup \{(\ell,m)\}} \left(1- u_{kn}^{i}\cdot q_{ij}\right)\right)\\
&	- \sum_{i=1}^{N}\sum_{k=1}^{K} p_{k}^{i}  \left(1 - \prod_{(n,j)\in A} \left(1- u_{kn}^{i}\cdot q_{ij}\right)\right)\\
 =& \sum_{i=1}^{N}\sum_{k=1}^{K} p_{k}^{i}\cdot  u_{k\ell}^{i}\cdot q_{im}\cdot \prod_{(n,j)\in A} \left(1- u_{kn}^{i}\cdot q_{ij}\right)\geq 0
\end{align*}
\end{footnotesize}
which proves \textit{submodularity}, and 
\begin{footnotesize}
\begin{align*}
&f(B)-f(A)  =\\
&\hspace{0.5cm}\sum_{i=1}^{N}\sum_{k=1}^{K} p_{k}^{i}\left(1 - \prod_{(n,j)\in B} \left(1- u_{kn}^{i} q_{ij}\right)\right) -\\ &\hspace{1.5cm}\sum_{i=1}^{N}\sum_{k=1}^{K} p_{k}^{i}\left(1 - \prod_{(n,j)\in A} \left(1- u_{kn}^{i} q_{ij}\right)\right)=\\
&\sum_{i=1}^{N}\sum_{k=1}^{K} p_{k}^{i} \prod_{(n,j)\in A} \left(1- u_{kn}^{i}\cdot q_{ij}\right) \cdot \left(1-\prod_{(n,j)\in B\backslash A} \left(1- u_{kn}^{i}\cdot q_{ij}\right)\right)\geq0 
\end{align*}
\end{footnotesize}
which proves \textit{monotonicity}.


To show that the constraint is a matroid (see e.g.~\cite{krause2012submodular} for the definition of a matroid), we consider the set $\mathcal{V} = \mathcal{K}\times\mathcal{M}$ (i.e., all the possible tuples \{content,cache\}) and the collection of subsets of $\mathcal{V}$ that do not violate the capacity of the caches 
\begin{equation*}
{I} = \left\{S\subseteq 2^{\mathcal{V}}: |S\cap 2^{\{\mathcal{K},m\}}|\leq C,~~\forall m\in\mathcal{M}\right\}
\end{equation*}
Then:\\
\noindent (a) For all sets $A$ and $B$ that $A\subseteq B \subseteq \mathcal{V}$, it holds that if $B\subseteq \mathcal{I}$ (i.e., the caching placement defined by $B$ does not violate the size of the caches) then $A\subseteq \mathcal{I}$, because in $A$ every cache has to store the same or less content than in $B$ and thus no capacity constraint is violated.\\
\noindent (b) For all sets $A,B\in\mathcal{I}$ (i.e., feasible caching placements) and $|B|>|A|$ (i.e., in $B$ more contents are cached), $\exists \ell\in B\backslash A$ that $A\cup\{\ell\}\in\mathcal{I}$, since in $A$ not all caches are full (otherwise $B$ would violate the capacity constraint, i.e., $B\notin\mathcal{I}$), which means that there exists at least one more content can be cached (and this content can be selected to be from the set $B$).

From (a) and (b), it follows directly that the constraint is a matroid~\cite{krause2012submodular}.

\section{Proof of Corollary~\lowercase{\ref{thm:corollary-femto-second-model-subcases}}}\label{app:corollary-subcases-second-model}

\textbf{Sub-case 1.} Similar to the proof Corollary~\ref{thm:corollary-single-cache-subcases}, by taking the conditional expectations on the different $F_{kn}(x)$ in \eq{eq:objective-femto-second-model}, we get:
\begin{align}
SCHR = {\sum_{i=1}^{N}}\sum_{k=1}^{K} p_{k}^{i} \cdot E\left[ \max_{n\in\mathcal{K}, j\in\mathcal{M}} \left(u_{kn}^{i} \cdot x_{nj}\cdot q_{ij}\right)\right] \label{eq:expectation-max}
\end{align}
Let us consider the random variable $Y_{kS}$, where $Y_{kS}= \max_{(n,j)\in S} \left(u_{kn}^{i}\right)$, where $S=\{(\ell,m):\ell\in\mathcal{K}, m\in\mathcal{M},x_{\ell m}=1\}$. The distribution of $Y_{kS}$ (as the max value of independent random variables) is given by
\begin{equation}
F_{kS}(x) = P\{Y_{kS}\leq x\}= \prod_{(n,j)\in S} F_{kn}(x)
\end{equation}
Then \eq{eq:expectation-max} becomes
\begin{align*}
SCHR 
&= {\sum_{i=1}^{N}}\sum_{k=1}^{K} p_{k}^{i} \cdot E\left[ Y_{kS}\cdot q_{ij}\right] \\
&= {\sum_{i=1}^{N}}\sum_{k=1}^{K} p_{k}^{i} \cdot E\left[ Y_{kS}\right]\cdot q_{ij} \\
&= {\sum_{i=1}^{N}}\sum_{k=1}^{K} p_{k}^{i} \cdot \left(\int \left(1-F_{kS}(t)\right)dt\right)\cdot q_{ij} \\
&= {\sum_{i=1}^{N}}\sum_{k=1}^{K} p_{k}^{i} \cdot \left(\int \left(1-\prod_{(n,j)\in S} F_{kn}(t)\right)dt\right)\cdot q_{ij}
\end{align*}

\textbf{Sub-case 2} follows straightforwardly.

\section{Proof of Lemma~\lowercase{\ref{lemma:np-and-submodular-second-model}}}\label{app:lemma-second-model}
\underline{Item (1):} Optimization Problem~\ref{problem:femto-second-utility-model} is of the exact same nature as Optimization Problem~\ref{problem:femto-related}, so it follows that it is NP-hard.

\noindent\underline{Item (2):} We proceed similarly to the proof of Lemma~\ref{lemma:single-submodular}. The objective function of \eq{eq:objective-femto-second-model} $f(X):\{0,1\}^{K\times M}\rightarrow \mathbb{R}$ is equivalent to a set function $f(S):2^{\mathcal{K}\times \mathcal{M}}\rightarrow \mathbb{R}$, where $\mathcal{K}$ and $\mathcal{M}$ are the finite \emph{ground sets} of contents and SCs, respectively, and $S = \{k \in \mathcal{K}, j\in\mathcal{M}: x_{kj}=1\}$:
\begin{equation}
f(S)\equiv\sum_{i=1}^{N} \sum_{k=1}^{K} p_{k}^{i}\cdot \max_{(n,j)\in S} \left(u_{kn}^{i} \cdot q_{ij}\right)   
\end{equation}

For all sets $A\subseteq B\subset V$ and \{content, SC\} tuples $(\ell,m)\in V\backslash B$, we get
\begin{align*}
& f\left(A\cup \{(\ell,m)\} \right) - f\left(A \right) =\\
& =	\sum_{i=1}^{N}\sum_{k=1}^{K} p_{k}^{i} \max_{(n,j)\in A\cup \{(\ell,m)\}} \left(u_{kn}^{i} q_{ij}  \right)  \\
&\hspace{1.5cm}
	- \sum_{i=1}^{N}\sum_{k=1}^{K} p_{k}^{i} q_{i}  \max_{(n,j)\in A} \left(u_{kn}^{i} q_{ij}  \right)
\\
& = \sum_{i=1}^{N}\sum_{k=1}^{K} p_{k}^{i} R\left(u_{k\ell}^{i}\cdot q_{im}-\max_{(n,j)\in A} \left(u_{kn}^{i}q_{ij}\right)\right)
\end{align*}
where in the last equation we use the \textit{ramp function} defined as $R(x) = x$ for $x\geq0$ and $R(x)=0$ for $x<0$. 
Subsequently,
\begin{align*}
\left[f\left(A\cup \{(\ell,m)\} \right) - f\left(A \right)\right] &- \left[f\left(B\cup \{(\ell,m)\} \right) - f\left(B \right)\right] =\\
=\sum_{i=1}^{N}\sum_{k=1}^{K} &p_{k}^{i} \left[R\left(u_{k\ell}^{i} q_{im}-\max_{(n,j)\in A} \left(u_{kn}^{i}q_{ij}\right)\right)\right.\\
&\left.	-  R\left(u_{k\ell}^{i}q_{im}-\max_{(n,j)\in B} \left(u_{kn}^{i}q_{ij}\right)\right)\right]
\end{align*}
The above equation is always $\geq0$ (which proves that the objective function \eq{eq:objective-femto-second-model} is \textit{submodular}), since the ramp function is monotonically increasing and comparing the two arguments of the function $R(x)$ in the above equation, gives
\begin{align*}
u_{k\ell}^{i}q_{im}-\max_{(n,j)\in A} \left(u_{kn}^{i}q_{ij}\right) - \left(u_{k\ell}^{i}q_{im}-\max_{(n,j)\in B} \left(u_{kn}^{i}q_{ij}\right)\right) =\\
\max_{(n,j)\in B} \left(u_{kn}^{i}q_{ij}\right) - \max_{(n,j)\in A} \left(u_{kn}^{i}q_{ij}\right)\geq 0
\end{align*}
since $B$ is a superset of $A$ and therefore its maximum will be at least equal or greater than the maximum value in set $A$.

Similarly, since $A\subseteq B$ it holds
\begin{footnotesize}
\begin{align*}
f(B)-f(A)  =\sum_{i=1}^{N}\sum_{k=1}^{K} p_{k}^{i} \left(\max_{(n,j)\in B} \left(u_{kn}^{i}q_{ij}\right) - \max_{(n,j)\in A} \left(u_{kn}^{i}q_{ij}\right)\right) \geq0
\end{align*}
\end{footnotesize}
which proves that the \eq{eq:objective-femto-second-model} is \textit{monotone}.

%
\newpage

\section{(1-1/\textit{e}) approximation algorithm for Optimization Problem~\lowercase{\ref{problem:single-knapsack}}}\label{app:algorithm-single-1/e}

\begin{algorithm}

\caption{$\left(1-\frac{1}{e}\right)$-approximation Algorithm for Optimization Problem~\ref{problem:single-knapsack}.}
\begin{algorithmic}[1]
\State $A\leftarrow \left\{S\subseteq \mathcal{K}:~|S|<3~\text{and}~\sum_{i\in S}s_{i}\leq C\right\}$
\State $S^{(1)}\leftarrow  \underset{S \in A}{argmax} f(S)$
\Statex
\State $B\leftarrow \left\{S\subseteq \mathcal{K}:~|S|=3~\text{and}~\sum_{i\in S}s_{i}\leq C\right\}$
\State $S^{(2)}\leftarrow \emptyset$
\For {$S\in B$}
	\State $U\leftarrow \mathcal{K}\backslash S$
	\State $W\leftarrow$ list$(w_{i}),~\forall i\in U$
	\State $H\leftarrow$\textproc{ModifiedGreedy}($U$,[W])
	\If {$f(H)>f(S^{(2)})$}
		\State $S^{(2)}\leftarrow H$
	\EndIf 
\EndFor
\Statex
\If{$f(S^{(1)})>f(S^{(2)})$}
	\State $S^{*}\leftarrow  S^{(1)}$
\Else
	\State $S^{*}\leftarrow  S^{(2)}$
\EndIf
\State\Return $S^{*}$
\end{algorithmic}
\label{alg:complex-single-knapsack}
\end{algorithm}


\end{document}